\documentclass[11pt]{article}

\usepackage[a4paper,margin=1in]{geometry}
\usepackage{graphicx,amsmath}
\usepackage{authblk}
\usepackage[linesnumbered,ruled,vlined,algo2e]{algorithm2e}
\usepackage{amssymb}
\usepackage{tabularx}
\usepackage{multirow}
\usepackage{makecell}
\usepackage{array}
\usepackage{wrapfig}
\usepackage{hyperref}
\usepackage{amsthm}

\newtheorem{theorem}{Theorem}
\newtheorem{lemma}[theorem]{Lemma}

\newtheorem{proposition}[theorem]{Proposition}

\newtheorem{claim}[theorem]{Claim}

\theoremstyle{definition}
\newtheorem{definition}[theorem]{Definition}

\theoremstyle{remark}
\newtheorem{remark}[theorem]{Remark}

\newcommand{\reject}{\texttt{reject}}
\newcommand{\accept}{\texttt{accept}}
\newcommand{\framework}{\textbf{\texttt{refix}}}

\title{Distributed Local Verification using Proofs with(out) Errors}

\author{
Pawe\l\ Garncarek$^{1}$\thanks{\texttt{pgarn@cs.uni.wroc.pl}},
Tomasz Jurdzi\'nski$^{1}$\thanks{\texttt{tju@cs.uni.wroc.pl}},
Dariusz Kowalski$^{2}$\thanks{\texttt{darek.liv@gmail.com}},
Subhajit Pramanick$^{1}$\thanks{\texttt{suvo.iitg17@gmail.com}}
}

\date{
$^{1}$Institute of Computer Science, University of Wroc\l aw, Poland\\
$^{2}$Department of Computer Science, Augusta University, USA
}

\begin{document}

\maketitle

\begin{abstract}
  
  We investigate well-known \textit{local verification} problems concerning graph properties in distributed networks under the framework of \textit{locally checkable proofs} (LCP): if a graph has a given property, then there exists a proof, assigned by a \emph{prover} on the graph nodes, such that the verifier (a distributed algorithm running at every node) makes all nodes accept, whereas if the property is not satisfied, then for any proof on the graph, the verifier makes at least one node reject.
  Nodes generally access a constant ($\geq 1$) size local neighborhood, referred to as \emph{view distance}, to make a decision.

  Our focus is twofold. First, among multiple properties we consider in the paper, we put particular emphasis on verifying cycle existence -- whether the graph has a cycle or not (not the same as checking cycle-freeness) -- mainly because this can be verified with only $3$ labels and access to $1$ hop neighborhood, with a matching lower bound. 
  What is more interesting is that, inspired by the widely used technique of encoding a sense of direction (induced by BFS distances from a selected root) along a path used in many verification problems, we introduce a novel gadget (by repeated use of a special string $001101$) for encoding direction using only $2$ labels and a view distance of $3$ for each node. Although we use it for cycle existence, this gadget may be of independent interest for other verification tasks.

  Second, we consider an  \emph{erroneous proof} model, where an adversary can corrupt proof labels (assigned by a trusted prover or \emph{oracle}) of at most $i$ nodes within the $2i+1$-hop neighborhood of each node.
    We present an algorithmic framework, called \textbf{\texttt{\textbf{refix}}}, that transforms the verification algorithm for the error-free variant of the LCP to tolerate the aforementioned error model, at the cost of $2i+1$ view distance per node.
    We illustrate the applicability of the framework to three graph properties: cycle existence, cycle-freeness, and bipartiteness. 
    We also provide lower bounds to establish the relation between the number of errors and the view distance for every node.
    Finally, as a first step toward studying LCPs in the CONGEST model, we show that cycle-existence verification with $2$ proof labels and view distance $3$ can be implemented using only a $3$-round algorithm, suggesting a promising direction for extending LCPs beyond the LOCAL model.
\end{abstract}

\noindent\textbf{Keywords:}
Verification Problems, Distributed Algorithms, Graph Algorithms,
Locally Checkable Proofs, Error Tolerance, Proof Labeling Schemes.

\newpage

\tableofcontents

\newpage

\setcounter{page}{1}

\section{Introduction}
\label{sec:introduction}

In many distributed systems, nodes are required not only to compute a solution, but also to \emph{locally verify} that a desired global structure or property is present in the network. Such tasks give rise to \emph{verification problems}, where the goal is to determine whether a graph satisfies a specified property using only local information, where the term  \emph{graph property} corresponds to a family of graphs closed under isomorphism. In the context of yes--no verification problems, a standard and well-accepted model \cite{goos2016locally} requires the following: (i) for a yes-instance, all nodes must output $1$; and (ii) for a no-instance, at least one node must output $0$.

In distributed graph algorithms, the model in which nodes are provided with task-specific auxiliary information is well established and widely studied~\cite{10.1145/3382734.3405715,balliu_et_al:LIPIcs.DISC.2021.8,10.1145/3188745.3188860,bousquet_et_al:LIPIcs.STACS.2024.21,10.1145/1073814.1073817,feuilloley2022error}. A popular direction in this template is the notion of \emph{proof labeling schemes} (PLS), introduced by Korman et al. \cite{10.1145/1073814.1073817}, which is a verification scheme involving a \emph{prover} and a \emph{verifier}. 
The prover has access to the graph (including the structure) and has unlimited power. It assigns the \emph{proof labels} (sometimes called certificates \cite{bousquet_et_al:LIPIcs.DISC.2025.18}) and wants the nodes to accept, irrespective of whether it is a yes-instance or a no-instance. 
The verifier is a distributed algorithm running on every node that distinguishes between yes-instances, where the prover is helping, and no-instances, where the prover is trying to mislead the nodes, with the help of the proof labels of the node and its neighbors. 
Later G{\"o}{\"o}s and Suomela \cite{goos2016locally} generalised the idea of proof labeling schemes under the name of \emph{locally checkable proofs} (LCP), where the main difference is that a node can inspect a constant radius (instead of just 1) neighborhood (often called \emph{view distance}) and that the node can access the proof labels and identifier of the nodes within that ball (formally defined in Sec. \ref{sec:model}). A survey by Feuilloley \cite{feuilloley2021introduction} consolidates these notions. This framework on local verification has been extended in various directions, including PLSs with restricted provers \cite{DBLP:conf/wdag/EmekGK22}, randomized \cite{10.1145/2767386.2767421} and quantum PLSs \cite{fraigniaud_et_al:LIPIcs.ITCS.2021.28}, PLSs with global proofs in addition to local ones \cite{feuilloley_et_al:LIPIcs.DISC.2018.25}, and many more.

In PLS and LCPs, an important metric is the proof size (the number of required bits), and a long line of work focuses on this for various graph properties on different families of graphs \cite{bousquet_et_al:LIPIcs.DISC.2025.18,10.1145/3519270.3538416,fraigniaud_et_al:LIPIcs.DISC.2023.20,DBLP:journals/algorithmica/FraigniaudMRT24}.
Although certain graph properties, e.g. $s-t$ reachability, bipartiteness, and cycles of even length, have been studied in \cite{goos2016locally} with constant-size proofs, a proof size of $O(\log n)$ has emerged as quite common and standard for many local verification problems  \cite{feuilloley_et_al:LIPIcs.DISC.2018.25,10.1145/2499228}, including verifying the existence of a unique leader \cite{feuilloley2025proving}, checking cycle-freeness, spanning tree \cite{10.1007/978-3-319-12340-0_2,ostrovsky2017space}, non-bipartiteness \cite{goos2016locally} and minor-freeness \cite{bousquet2024local}.
In a recent line of works \cite{ESPERET202268,10.1145/3382734.3404505,FEUILLOLEY20239}, it has been shown that planar graphs and graphs of bounded genus can be verified with $O(\log n)$ bits.
Some verification problems require even larger labels: for example, the MST-verification problem \cite{DBLP:journals/dc/KormanK07} (edge weights in $[1, W]$) with a proof of size $O(\log n \log W)$, diameter $k$ verification with an $O(n \log n)$ proof \cite{CENSORHILLEL2020112}.
It is common intuition that hereditary properties (i.e., removing a node or an edge yields a graph that falls in the same class as the original one) are easier to verify (using $O(\log n)$ size labels) in the sense that information about every single node or edge is not required to be encoded in the proof. 

\vspace{1.7mm}
\noindent $\blacktriangleright$ \textbf{Our Model and Results.} In this paper, we adopt the model of locally checkable proofs (LCP), with a more focused concentration on the verification of \emph{cycle existence} -- for a cyclic graph, there must exist a proof such that the verifier makes all nodes accept (output $1$), and for an acyclic graph, the verifier makes some node(s) reject (output $0$) for any possible proof.
This problem remains particularly interesting to us as it can be verified using only $3$ distinct proof labels, i.e., $O(1)$-sized proof, unlike cycle-freeness.
We emphasize that cycle-freeness is different from cycle existence in the context of LCPs, where the problem statement is quite the opposite -- for an acyclic graph, there must exist a proof such that the verifier makes all nodes accept, and for a cyclic graph and for any proof on it, there must exist at least one rejecting node. We divide our contributions into two major verticals: \emph{oracular proofs} (where the prover assigns correct proof labels on the nodes) and \emph{erroneous proofs} (where an adversary can corrupt parts of the oracular proof with a goal of misleading the verifier).
We talk more about each of these in subsequent paragraphs (see also Table \ref{table:results}).

\newcolumntype{C}{>{\centering\arraybackslash}X}

\renewcommand\theadfont{\scriptsize\bfseries}
\renewcommand\theadalign{cc}

\begin{table}[ht]
    \centering
    \scriptsize
    \setlength{\tabcolsep}{2pt} 
    \renewcommand{\arraystretch}{1.5} 

    \begin{tabularx}{\textwidth}{|>{\centering\arraybackslash}p{1.4cm}||c|c|c|C||c|c|c|c|c|C|}
    \hline

    \multirow{2}{*}{\thead{Problem}} & \multicolumn{4}{c||}{\thead{Proofs without Errors}} & \multicolumn{6}{c|}{\thead{Proofs with Errors}} \\ \cline{2-11} 
     & \thead{View \\ Dist.} & \thead{Distinct \\Proof \\ Labels} & \thead{+ \\or \\--} & \thead{Reference} & \thead{No of\\ Errors} & \thead{Error \\ Dist.} & \thead{View \\ Dist.} & \thead{Distinct \\Proof \\ Labels} & \thead{+ \\or \\--} & \thead{Reference} \\ \hline \hline

    \multirow{3.5}{*}{\makecell{Cycle \\ existence}} 
    & 1 & 3 & $+$ & Thm. \ref{thm:cycle_3labels} & \multirow{2}{*}{$i$} & \multirow{2}{*}{any} & \multirow{2}{*}{$i$} & \multirow{2}{*}{$\infty$} & \multirow{2}{*}{$-$} & \multirow{2}{*}{Thm. \ref{thm:impos_cycle}} \\ \cline{2-5}
    & 1 & 2 & -- & Thm. \ref{thm:impos_2labels} & & & & & & \\ \cline{2-11}
    & 3 & 2 & $+$ & Thm. \ref{thm:cycle_2labels_3viewdistance} & $i$ & $2i+1$ & $2i+1$ & $O(n)$ & $+$ & Sec. \ref{subsec:cycle-detection-with-errors} \\ \hline \hline

    \multirow{2}{*}{\makecell{Cycle \\ freeness}} 
    & 1 & $O(n)$ & $+$ & \cite{goos2016locally,ostrovsky2017space} & $i$ & any & $i$ & $\infty$ & $-$ & Sec. \ref{sec:lower-bound-erroneous-labelings} \\ \cline{2-11}
    & $k$ & $O(n)$ & -- & \cite{ostrovsky2017space} & $i$ & $2i+1$ & $2i+1$ & $O(n)$ & $+$ & Sec. \ref{sec:erroneous-labeling} \\ \hline \hline

    \multirow{2}{*}{\makecell{Bipartit- \\ eness}}
    & \multirow{2}{*}{1}
    & \multirow{2}{*}{2}
    & \multirow{2}{*}{$+$}
    & \multirow{2}{*}{trivial}
    & $i$ & any & $i$ & $\infty$ & $-$
    & Sec.~\ref{sec:lower-bound-erroneous-labelings}
    \\ \cline{6-11}
    
    &&&&
    & $i$ & $2i+1$ & $2i+1$ & 2 & $+$
    & Sec.~\ref{sec:erroneous-labeling}
    \\ \hline

    \end{tabularx}
    \caption{Summary of results. Here $+$ denotes achievability and $-$ denotes impossibility. Column ``No of Errors'' stores the number of errors within the distance ``Error Dist.'' from each node.}
    \label{table:results}
    
\end{table}

To our knowledge, the literature mostly discusses proofs, which we have termed oracular proofs in this paper. 
In many verification problems (e.g. cycle-freeness, uniqueness of the leader, cycle existence, path parity, etc.), a common and natural approach is to consider proofs that establish a sense of direction (or orientation) for edges of the graph. Although different problems require different proof construction based on the property we are verifying, a general strategy is to designate a node (or sometimes a connected subgraph, especially for cycle existence, which we will see in Sec. \ref{sec:3labels}) as a root, assign it the proof label $0$, and assign every other node the (BFS) distance from it as the proof label. 
Restricting attention to constant-size proofs—specifically, using only $2$ or $3$ distinct labels—we ask whether it is possible to encode such an orientation.

This is indeed possible, and we show it for cycle existence (Definition \ref{def:cycle-existence}).
Using $3$ distinct proof labels and a $1$-hop view for each node, such an orientation can be defined naturally in this setting. However, we show that reducing the number of proof labels to $2$ under the same constraint is impossible.
What is more interesting in cycle existence is that, with a cost of $3$-hop view distance for every node, we present a non-trivial way in Sec. \ref{sec:2label-3viewdistance-algorithm} to encode orientation using only $2$ labels (using a repeated use of a special string $001101$). 
This technique may be of independent interest for other verification problems, e.g., Orientation in anonymous trees Problem \cite{10.1145/1073814.1073817} using only $2$ labels.

We have another vertical of results, which concerns \emph{proofs with errors}. 
We study a setting in which the adversary is allowed to modify proof labels of at most $i$ nodes within the $(2i+1)$-hop neighborhood of each node (defined formally in Sec. \ref{sec:model}).
To tackle this, we introduce an algorithmic framework \textbf{\texttt{refix}} that requires every node to have a view distance of $2i+1$.
Although we could not prove that \framework{} works for any verification problem or provide a formal classification of verification problems that can be verified using \framework{}, we illustrate its applicability to three verification problems: cycle existence, cycle-freeness, and bipartiteness. 
We establish a lower bound in Sec. \ref{sec:lower-bound-erroneous-labelings}, to demonstrate a correlation between $i$ and the view distance required for every node. The erroneous proof model resembles a prediction-like model, which is a relatively recent topic among the distributed community. Despite some differences with our model, one can refer to Boyar et al. \cite{10.1145/3732772.3733530}, who studied a prediction model on classical graph problems like MIS, maximal matching, and coloring.

Finally, we initiate the study of implementing LCPs in the CONGEST model. We show that the $2$-label, view-distance-$3$ verifier for cycle existence admits a $3$-round CONGEST implementation. Since LCPs are naturally designed around nodes accessing larger local neighborhoods, efficiently conveying such information under bandwidth constraints is a non-trivial challenge.
This suggests a new direction at the intersection of local verification and communication constraints. We conclude with several open problems, including the implementation of LCPs with errors in CONGEST.



\section{Model and Preliminaries}
\label{sec:model}

\noindent $\blacktriangleright$ \textbf{Graphs.}
The graphs under consideration are simple connected graphs. We represent a graph by $G$ with $V(G)$ as the set of nodes and $E(G)$ as the set of edges.
When the context is clear, we omit $G$ from the vertex and edge set.
We denote the number of nodes in $G$ by $n$.
By $dist(u,v)$, we represent the distance, i.e., the length of a shortest path between two nodes $u$ and $v$.
For a node $v$, we use the notation $N_{k}(v)$, called the \emph{$k$-hop neighborhood of $v$}, to denote the set of all nodes within the distance $k$ hops from $v$ (the shortest path from $v$ to any node in $N_k(v)$ has  $\leq k$ edges), including $v$ itself.
The rest of the model components are the same as in \cite{DBLP:conf/podc/GoosS11}.

\vspace{1mm}
\noindent $\blacktriangleright$ \textbf{Local Verification.}
As mentioned earlier, we follow the \emph{locally checkable proof} (LCP) model (ref. \cite{DBLP:conf/podc/GoosS11}) in this paper.
Let us formally define all its components for our paper.

Let us consider a function $L: V \rightarrow \{0,1\}^*$ that maps a binary string to every node of $G$. 
The function $L$ is referred to as \emph{a proof} for $G$.
From the perspective of a node $v$, we call $L(v)$ the \emph{proof label} of $v$.
The maximum number of bits in any proof label $L(v)$ for $v \in V$ is called the size $|L|$ of the proof $L$.

Let $k\geq 1$. A \textit{local verifier} $\mathcal{A}$ is a function  that maps every triple $(G[N_{k}(v)],$ $ L[N_k(v)], v)$ to a binary output $0$ or $1$ for some $k$, where $G[N_k(v)]$ is the subgraph induced by the nodes in $N_k(v)$, $L[N_k(v)]$ is the proof restricted on the subgraph $G[N_k(v)]$ and $v \in V$.
In this case, we say that $v$ has a \emph{view distance} $k$ that consists of all the nodes within the neighborhood $N_k(v)$ along with their proof labels and the graph structure induced by all those nodes.
For every node $v$, the verifier $\mathcal{A}$ takes as input the view of $v$ and outputs either $1$ (accept) or $0$ (reject) based on that information.

Recall that a graph property $\mathcal{P}$ is a subset of graphs that is closed under isomorphism. 
A graph property $\mathcal{P}$ admits a \textit{locally checkable proofs} (in short, LCP) if there is a local verifier $\mathcal{A}$ such that (\emph{\textbf{Completeness}}) if a graph $G \in \mathcal{P}$, there is a proof $L$ of $G$ such that $\mathcal{A}(G[N_{k}(v)],$ $ L[N_k(v)], v) = 1$ for every node $v \in V$, and (\emph{\textbf{Soundness}}) if $G \notin \mathcal{P}$, then for every proof $L$, there exists a node $v \in V$ such that $\mathcal{A}(G[N_{k}(v)], L[N_k(v)], v) = 0$.

We say that the node $v$ \emph{accepts} (resp. \emph{reject}) when it outputs $1$ (resp. $0$).
We put emphasis on the fact that when $G$ does not satisfy the property $\mathcal{P}$, there must exist at least one rejecting node for any proof assignment. 
It is also well-accepted to refer to a local verifier $\mathcal{A}$ as  
a \textit{verification algorithm}, and we adopt the same. 

\textbf{Equivalence with the LOCAL Model:} As mentioned earlier, this model is equivalent to the well-known LOCAL model \cite{peleg2000distributed}, in the sense that a verifier with view distance $k$ can be implemented in $k$ synchronous communication rounds \cite{DBLP:conf/podc/GoosS11} of the LOCAL model.

\vspace{1mm}
\noindent $\blacktriangleright$ \textbf{Erroneous or Adversarial Proofs.}
 In this paper, we segregate the concept of proofs on a given graph $G$ into two types. The first one is an \emph{oracular proof}, denoted by $L_O$ and provided by a trustworthy oracle, which means that the completeness and the soundness properties are satisfied under $L_O$. 
The other one is referred to as the \emph{erroneous} or \emph{adversarial proof}, denoted by $L_{adv}$, which is provided by the adversary by modifying the oracular proof of the graph $G$ for some property $\mathcal{P}$.
Such a modification introduced by the adversary may potentially cause all nodes to accept even if $G \notin \mathcal{P}$ or some nodes to reject when $G \in \mathcal{P}$.

To study its effect in the context of verification problems, we consider the following adversarial proof model.
In this paper, we assume that $L_{adv}$ differs from $L_O$ in at most $i$ nodes within $N_{2i+1}(v)$ for every node $v \in V$.
A graph property $\mathcal{P}$ admits an \textit{erroneous proof} if there is a local verifier $\mathcal{A}$ satisfying the following two properties:

\begin{itemize}
   \item \textit{Completeness:} If a given graph $G$ is in $\mathcal{P}$, there is a proof $L_O$ of $G$ such that for every $L_{adv}$ obtained from $L_O$ (i.e., $L_{adv}$ differs from $L_O$ in at most $i$ nodes within $N_{2i+1}(v)$ for every node $v \in V$) we have $\mathcal{A}(G[N_{k}(v)],$ $ L_{adv}[N_k(v)], v) = 1$ for every node $v \in V$.

    \item
    \textit{Soundness:} If $G$ is not in $\mathcal{P}$, then for every proof $L$, there exists a node $v \in V$ such that $\mathcal{A}(G[N_{k}(v)], L[N_k(v)], v) = 0$.
    
\end{itemize}

Notice that the completeness property has now changed compared to the model without errors. At an intuitive level, this means that the adversary modifies oracular proofs in such a way that for each node $v$, there are at most $i$ nodes within $N_{2i+1}(v)$, whose proof labels have been modified by the adversary, yet the local verifier is still able to accurately detect property $\mathcal{P}$.
We address each such modification of proofs as \emph{an error} introduced by the adversary.
From now on, we use $i$ exclusively to denote the number of errors.


\section{Local Verification with Oracular Proofs}
\label{sec:warm-up}

We begin with oracular proofs for three verification problems: cycle existence, cycle-freeness, and bipartiteness. 
Our main focus is on verifying cycle existence, as we show in this section some interesting results on this. 
We reemphasize that cycle existence and cycle-freeness are different (ref. following definitions). 
Cycle-freeness has already been studied in the literature~\cite{goos2016locally,ostrovsky2017space}, and Bipartiteness admits a trivial LCP. 
We discuss these LCPs further in more detail when we examine the proofs with errors later in this paper.

\begin{definition}{(Cycle Existence):}
\label{def:cycle-existence}
    Cycle existence is the following verification problem: if $G$ has a cycle, there must exist a proof such that a local verifier $\mathcal{A}$ makes all nodes accept; if $G$ is acyclic, then for every proof, $\mathcal{A}$ makes at least one node reject.
\end{definition}

\begin{definition}{(Cycle-freeness):}
    If $G$ is acyclic, there must exist a proof such that a local verifier $\mathcal{A}$ makes all nodes accept; if $G$ has a cycle, then for every proof, $\mathcal{A}$ makes at least one node reject.
\end{definition}

\subsection{Cycle Existence: View distance 1 and 3 Distinct Proof Labels}
\label{sec:3labels}

Let us first define the notions of core and tree nodes for convenience in the description. 
\begin{definition}{(Core Node and Tree Node):}
\label{definition:core-tree-node}
     A node $v$ is called a core node if it lies on a cycle or on any path connecting two cycles. Otherwise, it is called a tree node.
\end{definition}

Consider the following proof $L_O$ (with $3$ distinct proof labels $0,1,2$), illustrated in Fig. \ref{fig:4labels}.

\begin{itemize}
    \item If $v$ is a core node, $L_{O}(v) = 0$.

   \item Otherwise, $L_{O}(v)$ is equal to the distance to the nearest node with the proof label $0$ (i.e., a node on a cycle or on a path between cycles) taken modulo $3$.
\end{itemize}

\begin{wrapfigure}[11]{r}{0.55\textwidth}
  \centering
    \includegraphics[scale=0.45]{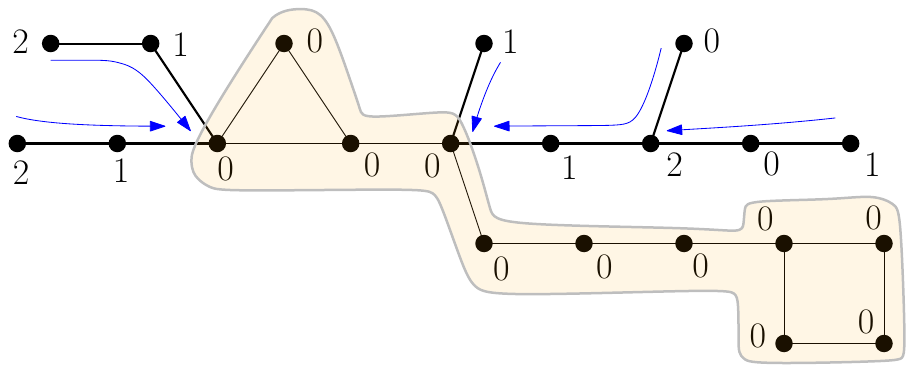}
    \caption{Cycle existence proof $L_O$ with $3$ proof labels. The implied direction among tree nodes is also marked.
    }

    \label{fig:4labels}
\end{wrapfigure}
\noindent $\blacktriangleright$ \textbf{Parent-Child Orientation by $L_O$:} 
Under this proof, for a tree node $v$, the neighboring node $u$ (resp. $w$) with the proof label $L_O(u) = L_O(v) - 1\pmod{3}$ (resp. $L_O(w) = L_O(v) + 1\pmod{3}$) is referred to as \textit{the parent} (resp. \textit{a child}) of $v$.
This parent-child relation between a pair of neighboring nodes establishes an orientation for every edge.

\noindent $\blacktriangleright$ \textbf{Verification Algorithm:} We now move to a quite straightforward algorithm $\mathcal{A}\textsc{-3labels}$ executed at node $v$. .
Let $L(v)$ denote the proof label of a node $v$.
If $L(v)=0$, then $v$ accepts if it has at least two neighbors with the proof label $0$ and no neighbor with proof label $2$.
Otherwise, when $L(v)\neq 0$, it checks whether exactly one of its neighbors has the proof label $L(v)-1 \pmod{3}$, while every other has the proof label $L(v)+1 \pmod{3}$. If so, $v$ accepts.
If neither of the above conditions is satisfied,  $v$ rejects.

\begin{algorithm2e}
    $L(v) \gets$ The proof label of $v$

    \If{$L(v) = 0$ and $v$ has at least two neighbors with the proof label $0$, but it has no neighbor with the proof label $2$}
        {
            $v$ accepts
        }
    \ElseIf{$L(v)\neq 0$ and $v$ has exactly one neighbor with the proof label $L(v)-1\pmod{3}$ and all other neighbors with the proof label $L(v)+1\pmod{3}$\label{line:labeling-on-tree-parts}}
        {
            $v$ accepts
        }
    \Else   
        {
            $v$ rejects
        }

	\caption{$\mathcal{A}\textsc{-3labels}$ (for node $v$)}
    
 \label{3label-algorithm-pseudocode}
\end{algorithm2e}

We prove the following theorem.

\begin{theorem}
\label{thm:cycle_3labels}
    The proof $L_O$, consisting of $3$ distinct proof labels, together with the verification algorithm $\mathcal{A}\textsc{-3labels}$, establishes an LCP for verifying cycle existence with every node having view distance $1$. 
\end{theorem}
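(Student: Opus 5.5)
The proof splits into completeness and soundness.

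\emph{Completeness.} Let $G$ contain a cycle and let $C$ be its set of core nodes. We show that under $L_O$ every node accepts.

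First, $C$ is nonempty, and $G[C]$ has minimum degree at least $2$ in $G[C]$:
\begin{itemize}
\item a node on a cycle has its two cycle-neighbours in $C$;
\item an internal node of a path connecting two cycles has its two path-neighbours in $C$.
\end{itemize}
So each core node sees at least two neighbours labelled $0$.

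Second, $G-E(G[C])$ decomposes into trees, each attached to $C$ at exactly one core node. Suppose a tree node $u$ had two distinct routes to $C$, or were adjacent to two core nodes. Then $u$ would lie on a cycle, or on a path between two cycles (the core is connected inside a connected graph), contradicting that $u$ is a tree node. Hence every tree node $v$ at distance $d\ge 1$ from $C$ has a unique neighbour at distance $d-1$, and all its other neighbours are at distance $d+1$. Edges between two nodes at equal distance cannot occur, since they would close a cycle through tree nodes.

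Taking distances modulo $3$:
\begin{itemize}
\item a core node's tree neighbours are at distance $1$, so they get label $1$, never $2$;
\item a tree node with label $\ell\neq 0$ sees exactly one neighbour labelled $\ell-1$ and all others labelled $\ell+1 \pmod 3$.
\end{itemize}
One detail: a tree node at distance $d\equiv 0 \pmod 3$, $d\ge 3$, also receives label $0$. I would check that it is accepted by the first rule, or else adjust the argument. In fact it is not accepted: it has only one neighbour labelled $0$, namely none, since its parent is labelled $2$. So I expect to handle this by noting that the intended proof must distinguish this case. I would re-read the labelling as ``distance mod $3$'' and verify the first rule. The conflict here, namely that label $0$ is reused on tree nodes, is the main obstacle, and would need the label convention to be clarified. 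The cleanest fix I would try is to read the labels so that tree nodes cycle through $1,2,\dots$ while still being accepted. Otherwise, I would reinterpret the second rule for those nodes.

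\emph{Soundness.} Suppose $G$ is a tree and all nodes accept under some labelling $L$. Orient each edge from $v$ to the neighbour labelled $L(v)-1$ whenever $L(v)\neq 0$. Each nonzero node then has exactly one out-edge.

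Consider the set $Z$ of nodes labelled $0$.
\begin{itemize}
\item If $Z$ is nonempty, each node of $Z$ has at least two neighbours in $Z$. So $G[Z]$ is a nonempty finite graph of minimum degree $2$, hence contains a cycle, a contradiction.
\item If $Z=\emptyset$, follow out-edges from any node. By finiteness we eventually revisit a node, producing a closed walk. Since labels change by $-1 \pmod 3$ at each step, consecutive nodes are distinct. Moreover, the walk never immediately backtracks: if $v\to u$ and $u\to v$, then $L(v)=L(u)-1$ and $L(u)=L(v)-1$, so $2\equiv 0 \pmod 3$, which is impossible. A non-backtracking closed walk contains a cycle, again a contradiction.
\end{itemize}
So some node rejects. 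This half is straightforward.
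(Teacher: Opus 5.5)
Your soundness half is correct and is essentially the paper's argument. Where the paper takes an endpoint of a maximal path of $0$-labelled nodes (it has at most one $0$-labelled neighbour), you use that a nonempty finite graph of minimum degree $2$ contains a cycle. Your explicit non-backtracking check for the parent walk ($2\not\equiv 0 \pmod 3$) supplies a step the paper leaves implicit.

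The gap is in completeness, and the obstacle you hit is real: it is an error in the paper, not in your reasoning. The paper's completeness proof asserts that $L_O(v)=0$ implies $v$ is a core node. This is false for tree nodes at distance $3,6,9,\dots$ from the core: such a node has its parent labelled $2$ and no $0$-labelled neighbour, so $\mathcal{A}$\textsc{-3labels} rejects it. Worse, the statement fails for every proof, not just $L_O$. Take a triangle with a pendant path $c\,t_1t_2t_3$ attached at a triangle node $c$.
\begin{itemize}
\item The leaf $t_3$ cannot carry label $0$, since it would need two $0$-labelled neighbours. So $L(t_3)\in\{1,2\}$ and $L(t_2)=L(t_3)-1$.
\item If $L(t_3)=1$, then $L(t_2)=0$, and $t_2$ rejects because its neighbour $t_3$ is labelled $1$.
\item If $L(t_3)=2$, then $L(t_2)=1$, which forces $L(t_1)=0$; then $t_1$ rejects because its neighbour $t_2$ is labelled $1$.
\end{itemize}
Hence your first suggested repair (re-reading the labels so that tree nodes avoid the problem) cannot work. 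As written, your proposal leaves completeness unproved, and no proof of the statement as given can succeed.

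Your second suggestion, reinterpreting the rule, is the right one, but it changes the verifier. Let a $0$-labelled node also accept when it has exactly one neighbour labelled $2$ and all other neighbours labelled $1$, i.e.\ apply the parent/child rule to all three labels.
\begin{itemize}
\item \emph{Completeness} with $L_O$ then follows from the distance structure you already established. Each tree node has a unique neighbour at distance $d-1$ and all others at distance $d+1$. Each core node has at least two core neighbours and only $1$-labelled tree neighbours.
\item \emph{Soundness} needs one extra observation. Suppose $v$ is accepted by the core rule. Each of its (at least two) $0$-labelled neighbours then has a $0$-labelled neighbour, so it cannot pass the parent/child rule and must pass the core rule too. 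Thus the core-rule nodes, if any, induce a subgraph of minimum degree at least $2$, which is impossible in a tree. So every node passes the parent/child rule, and your non-backtracking parent walk gives the contradiction unchanged.
\end{itemize}
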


\begin{proof}[Proof of Theorem \ref{thm:cycle_3labels}] 

We prove the two properties, completeness and soundness, separately. 

Assume that $G$ contains at least one cycle. We show that $L_{O}$ is the proof on $G$ such that every node accepts according to Algorithm $\mathcal{A}\textsc{-3labels}$. 
Let $v$ be an arbitrary node in $G$. 
If $L_{O}(v)=0$, then $v$ lies either on a cycle or on a path connecting cycles. In both cases, $v$ has at least two neighbors with the proof label $0$, and by construction, none of its neighbors has the proof label $2$, so $v$ accepts. 
If $L_{O}(v)\neq 0$, then $v$ has exactly one neighbor with $L_{O}(v)-1$ and all other neighbors with $L_{O}(v)+1$.
Therefore, $v$ accepts.

We now show that for an acyclic graph $G$, no proof on $G$ can be accepted by all nodes. 
Let $L$ be an arbitrary proof on $G$.
We first consider a node with $v$ with $L(v) = 0$ (if it exists).
Consider a maximal connected path $P$ of nodes labeled $0$ that contains $v$.
Let $u$ be an endpoint of $P$.
Since the graph does not have any cycle, $u$ has at most one neighbor with the proof label $0$, and thus it rejects according to $\mathcal{A}\textsc{-3labels}$. 
Now we consider the case where none of the nodes has the proof label $0$. 
In this situation, we select an arbitrary node $v_0$ in $G$. Node $v_0$ rejects if it does not have a neighbor with label $L(v_0) - 1\pmod{3}$ (i.e., a parent under $L$), and we are done. 
Otherwise, let $v_1$ be the parent of $v_0$ such that $L(v_1) = L(v_0) -1$. 
Applying the same reasoning repeatedly on nodes $v_1$ and subsequently on its parent $v_2$, we obtain a sequence of nodes $v_0, v_1, v_2, \cdots$.
Since $G$ is acyclic, this sequence must eventually terminate at some node $v_k$ that does not have a parent, i.e., a node with the proof label $L(v_k)-1$. 
Such a node necessarily rejects according to $\mathcal{A}\textsc{-3labels}$.
This completes the proof of the theorem. 
\end{proof}

\subsection{Impossibility on Cycle Existence: View Distance 1 \& 2 Proof Labels}
\label{sec:2label-impossibility}

The LCP with $3$ distinct proof labels raises the question of whether $2$ distinct proof labels suffice with a view distance of $1$. The following theorem rules this out.

\begin{theorem}
    \label{thm:impos_2labels}
    There is no algorithm $\mathcal{A}$ that together with a proof $L$ consisting of only $2$ distinct proof labels (i.e., $|L| = 1$) establishes an LCP for cycle existence with the assumption that every node has the view distance $1$.
\end{theorem}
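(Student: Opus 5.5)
The plan is an indistinguishability (cut-and-paste) argument. I will build an acyclic graph, namely a path, together with a $2$-label proof, such that every node's radius-$1$ view also occurs as an accepting view in some cyclic yes-instance. Then every node of the path accepts, which contradicts soundness. Since the view distance is $1$, a node's view is just its own identifier and label together with the multiset of (identifier, label) pairs of its neighbours. In particular, a degree-$2$ node on a path does not distinguish left from right, so a window $(a,b,c)$ is the same view as $(c,b,a)$.

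\textbf{Step 1 (removing identifiers).} This is the step I expect to be the main obstacle. The verifier $\mathcal{A}$ may depend on identifiers, and pasted pieces must carry distinct identifiers. I would first apply the standard Ramsey-type argument (as in Göös--Suomela) to the finitely many view shapes of degree at most $3$ with labels in $\{0,1\}$. This yields an infinite set $I$ of identifiers on which the output of $\mathcal{A}$ depends only on the relative order of the identifiers and on the labels. If Ramsey is too heavy, the fallback is a counting argument: take many yes-instances on pairwise disjoint identifier sets and pigeonhole over the finitely many ``label-only types'' of their local windows. In either case, the pieces reused below must have their order patterns arranged consistently. I would handle this by assigning identifiers from $I$ in a monotone way along each copied segment, and by mirroring the order pattern when a segment is used reversed.

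\textbf{Step 2 (a yes-instance with a long pendant path).} Let $G$ consist of a cycle with a pendant path $v_1 v_2 \cdots v_N$ attached at $v_N$, where $v_1$ is a leaf and $N$ is large. Fix an accepting proof $L$ on $G$, and write $s_t = L(v_t)$. Every window $(s_{t-1},s_t,s_{t+1})$ for $1<t<N$ is accepted, and so is the leaf view $(s_1,s_2)$. Next I use a combinatorial fact about binary strings. Either both transitions $01$ and $10$ occur among the consecutive pairs $(s_t,s_{t+1})$, or the string is eventually constant, which gives a pair $00$ or $11$ equal to its own reverse. In both cases there are indices $j \le k$ with $(s_k,s_{k+1}) = (s_{j+1},s_j)$, that is, some consecutive pair appears together with its reverse.

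\textbf{Step 3 (gluing into a path and concluding).} Form the path whose labels read $s_1,\dots,s_j,s_{j+1}=s_k,$ followed by $s_{k-1},\dots,s_1$. More carefully, I glue the prefix ending at edge $(v_j,v_{j+1})$ to the reversed prefix ending at edge $(v_{k+1},v_k)$, identifying the matching edge and adjusting by one position if needed. Each interior node then sees a window that equals a window of $G$, or its mirror image. Both endpoints are copies of $v_1$. Both junction nodes see windows straddling the glued edge, and the condition $(s_k,s_{k+1}) = (s_{j+1},s_j)$ is exactly what makes those windows coincide with genuine windows of $G$ read in reverse. With identifiers chosen as in Step 1, every node's view in this path is an accepting view of $G$. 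So all nodes accept on an acyclic graph, contradicting soundness, and the theorem follows.

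The delicate points are, first, checking the junction windows carefully, including the degenerate case $j=k$ where the gluing is a mirror about a single edge, and second, making the identifier order pattern consistent across the reflection.
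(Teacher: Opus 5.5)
Your Steps 2 and 3 are correct and are the paper's proof in general form. The paper uses the pendant path $v_1v_2v_3v_4$ hanging off a triangle and splits into three cases: $L_1(v_2)=L_1(v_3)$, $L_1(v_3)=L_1(v_4)$, and $L_1(v_2)=L_1(v_4)\neq L_1(v_3)$. These are your cases $j=k=2$, $j=k=3$ and $(j,k)=(2,3)$, and your glued paths are exactly its $G_2$, $G_3$, $G_4$, so four pendant nodes already suffice. The paper treats a radius-$1$ view as the labelled neighbourhood alone; it simply asserts that $w_1$ and $w_4$ have the same view as $v_1$. It therefore has no analogue of your Step 1, and in that setting your argument is complete without it.

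Step 1, however, would fail, and the problem is structural, not a matter of details. Suppose both segments carry the labels of the single proof $L$ and identifiers increase along the pendant path of $G$. Look at the glued edge between the copy $v_j'$ of $v_j$ and the copy $v_k''$ of $v_k$. Matching $v_j$ forces $\mathrm{id}(v_k'')>\mathrm{id}(v_j')$, since $v_k''$ plays $v_{j+1}$. Matching $v_k$ forces $\mathrm{id}(v_j')>\mathrm{id}(v_k'')$, since $v_j'$ plays $v_{k+1}$. The case $j=k$ fails in the same way.

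More fundamentally, once the verifier may use identifier order, it can read a direction from it. For any identifiers, the prover can label the pendant path so that $L(v_t)=L(v_{t+1})$ iff $\mathrm{id}(v_{t+1})>\mathrm{id}(v_t)$. Orient each edge $\{x,y\}$ from $x$ to $y$ iff $[\mathrm{id}(y)>\mathrm{id}(x)]=[L(x)=L(y)]$. Both endpoints compute this identically from their own views. Under this labelling, every degree-$2$ view of the pendant path has exactly one outgoing edge, and so does the leaf view.

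Now take a path with two leaves on $m$ nodes, assembled from order-isomorphic copies of such views. It would need $m$ outgoing edges among only $m-1$ edges. So no arrangement of identifiers can rescue the reflection. The pigeonhole fallback does not help either: its junction views mix identifiers from two instances and occur in no yes-instance.

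Your argument does not rule out a verifier that accepts this labelling of $G$. An identifier-aware version would therefore need a new idea that exploits the cycle part of $G$. To match the paper, state explicitly that views carry no identifiers and drop Step 1.
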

\begin{figure}
\begin{minipage}[b]{0.32\linewidth}
    \centering
    \includegraphics[scale=0.45]{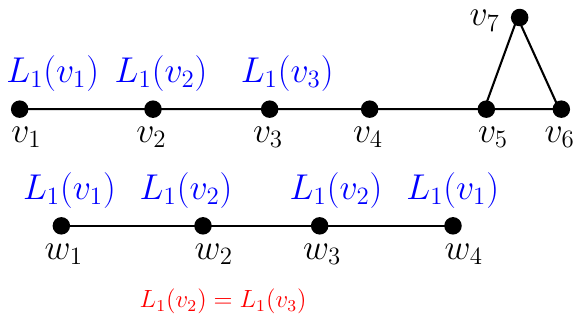}

\end{minipage}\hfill
\begin{minipage}[b]{0.32\linewidth}
    \centering
\includegraphics[scale=0.45]{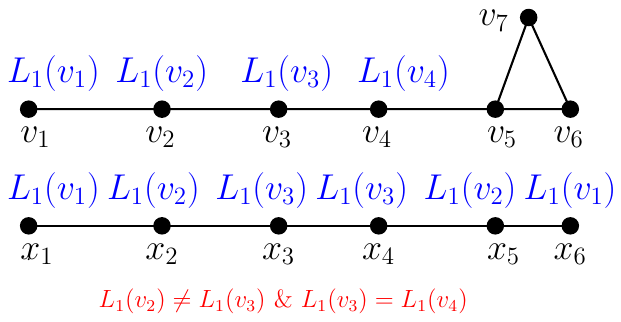}

\end{minipage}\hfill
\begin{minipage}[b]{0.32\linewidth}
    \centering
\includegraphics[scale=0.45]{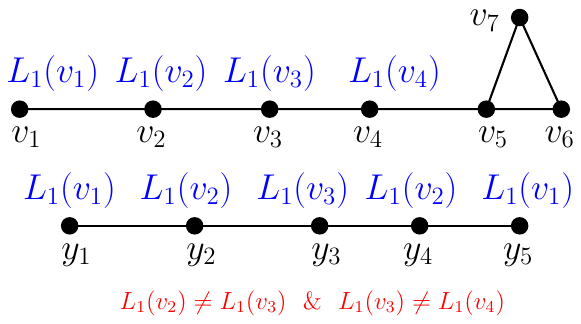}

\end{minipage}
\caption{In Theorem \ref{thm:impos_2labels}, we have only $2$ distinct proof labels. The top graph is $G_1$ (in all three sub-figures), while the (bottom) left, middle, and right graphs are $G_2$, $G_3$, and $G_4$, respectively.}

\label{fig:2-labels-impossibility}   
\end{figure}

\begin{proof}[Proof of Theorem \ref{thm:impos_2labels}]
Assume, for the sake of contradiction, that there exists a proof $L_1$ and a verification algorithm $\mathcal{A}$ for verifying cycle existence that requires only $2$ distinct proof labels and view distance $1$ for every node in the given graph.
Let us consider a cyclic graph $G_1$ consisting of $7$ nodes $\{v_1,v_2,\cdots,v_7\}$ connected in a path, with an additional edge $(v_5,v_7)$ creating a cycle.
We show that if all nodes in $G_1$ accept with the help of the proof $L_1$ and $\mathcal{A}$, then the algorithm $\mathcal{A}$ makes all nodes in an acyclic graph accept for some proof (we will construct such proofs from $L_1$), which is a violation of the soundness property.

We consider the following three cases depending on the proof labels under $L_1$.

$\blacktriangleright$ \textbf{Case i ($L_1(v_2) = L_1(v_3)$):}
Let $G_2$ be a simple path on four nodes $w_1,w_2,w_3,w_4$ with the proof $L_2$ defined as follows, see Fig. \ref{fig:2-labels-impossibility} (left).
$L_2(w_1)=L_2(w_4)=L_1(v_1)$ and
$L_2(w_2)=L_2(w_3)=L_1(v_2)$.
Both nodes $w_1$ and $w_4$ of $G_2$ under the proof $L_2$ have the same local view as node $v_1$ under $L_1$.
Since $\mathcal{A}$ makes the node $v_1$ accept, the same verification algorithm makes $w_1$ and $w_4$ also accept.
Similarly, nodes $w_2$ and $w_3$ under the proof $L_2$ have the same local view as node $v_2$ under $L_1$, since $L_1(v_2)=L_1(v_3)$.
Thus, all nodes in $G_2$ accept, violating the soundness property.

$\blacktriangleright$ \textbf{Case ii ($L_1(v_2) \neq L_1(v_3)$ and $L_1(v_3) = L_1(v_4)$):}
In this case, we consider another acyclic graph $G_3$ consisting of six nodes $x_1,x_2,x_3,$ $x_4,x_5,x_6$ arranged in a simple path.
Define a proof $L_3$ on $G_3$ by setting
$L_3(x_1)=L_3(x_6)=L_1(v_1)$,
$L_3(x_2)=L_3(x_5)=L_1(v_2)$,
and
$L_3(x_3)=L_3(x_4)=L_1(v_3)$, as shown in Fig. \ref{fig:2-labels-impossibility} (middle).

Nodes $x_1$ and $x_6$ under the proof $L_3$ have the same local view as node $v_1$ under $L_1$, and hence both accept.
Similarly, nodes $x_2$ and $x_5$ have the same local information as node $v_2$, and thus accept as well.
Finally, nodes $x_3$ and $x_4$ have the same local view as node $v_3$, since $L_1(v_3)=L_1(v_4)$, and therefore also accept.
Hence, all nodes in $G_3$ accept under $L_3$, violating the soundness~property.

$\blacktriangleright$ \textbf{Case iii ($L_1(v_2) \neq L_1(v_3)$ and $L_1(v_3) \neq L_1(v_4)$):}
In this case, we consider another acyclic graph $G_4$, a path on five nodes $y_1, y_2, y_3, y_4,$ and $y_5$, as shown in Fig. \ref{fig:2-labels-impossibility}~(right).
Clearly $L_1(v_2)=L_1(v_4)$ since there are only $2$ distinct proof labels available.
Define a proof $L_4$ on $G_4$ by setting
$L_4(y_1)=L_4(y_5)=L_1(v_1)$,
$L_4(y_2)=L_4(y_4)=L_1(v_2)$,
and
$L_4(y_3)=L_1(v_3)$, where $L_1(v_3)\neq L_1(v_2), L_1(v_4)$.

 The nodes $y_1$ and $y_5$ have the same local view as node $v_1$, and hence accept.
Similarly, $y_2$ and $y_4$ have the same local information as node $v_2$, and therefore accept.
Finally, node $y_3$ has the same local information as node $v_3$, since $L_1(v_2)=L_1(v_4)$, and thus also accepts.
Therefore, all nodes in $G_4$ accept under $L_4$, again violating soundness.

The above three cases together prove Theorem \ref{thm:impos_2labels}.
\end{proof}

\subsection{Cycle Existence:  2 Distinct Proof Labels but with View Distance 3}
\label{sec:2label-3viewdistance-algorithm}

We can circumvent the impossibility (Theorem \ref{thm:impos_2labels}) for cycle existence by considering an increased view distance for each node.
Observe that the proof $L_0$ in Sec. \ref{sec:3labels} establishes a sense of direction in the acyclic components (the ones we get after deleting all core nodes) of the graph along any path from a leaf node toward a cycle (by repeated use of $1, 2, 0$ (in order) on that path, see Fig. \ref{fig:4labels}). 
When a node can see only its $1$-hop neighborhood, a sense of direction cannot be created with just $2$ distinct proof labels (due to symmetry).
However, it is interesting to see that an increment in the view distance (to $3$) enables us to construct an interesting proof $L^*_O$ (with  $2$ distinct proof labels), see Fig. \ref{fig:2labels}.
\begin{figure}[h]
    \centering
    \includegraphics[width=0.5\linewidth]{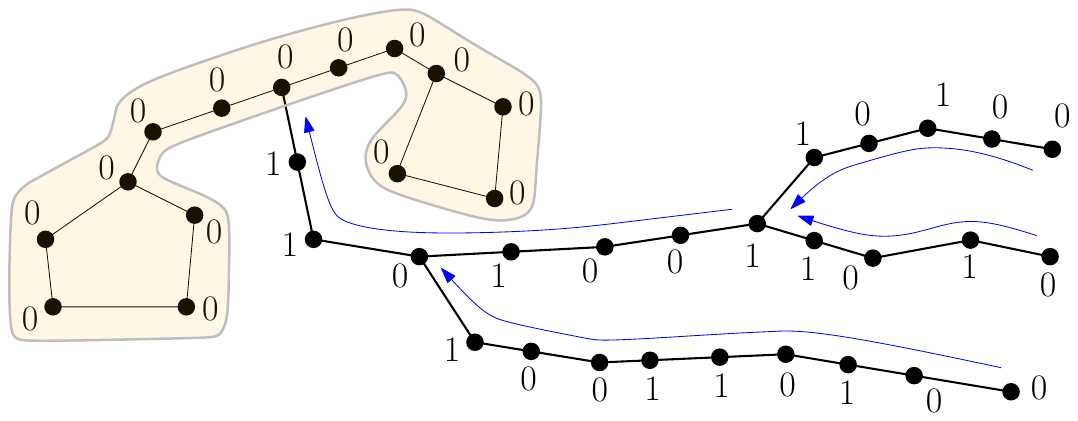}
    \caption{The proof $L_O^*$ with $2$ distinct proof labels, where each node has the view distance $3$. 
    }

    \label{fig:2labels}
\end{figure}

\begin{itemize}
    \item 
    If $v$ is a core node, $L^*_O(v) = 0$. 

    \item 
    Otherwise, let $u$ be the nearest core node to $v$. 
    If $dist(v,u) = 1, 2, 4\pmod{6}$, we have $L^*_O(v) = 1$.
    If instead, $dist(v, u) = 3, 5, 6\pmod{6}$, we have $L^*_O(v) = 0$.
\end{itemize}

Notice that all core nodes create a single connected component. Once we remove it from the graph, the remaining nodes create a forest. 
Before moving on to the algorithm description, let us define some notation for future use.
\begin{itemize}
    \item For two binary strings $S$ and $S'$, we use $S' \sqsubseteq S$ to represent that $S'$ is a substring of $S$.

\item A \emph{route} is a walk $P=v_1v_2\dots v_x$ for some $x$ such that $v_i \neq v_{i+2}$ for all $i$. This means that routes do not permit walking back and forth between two consecutive nodes in the walk, but they do allow walking repeatedly along a cycle.

\item By \emph{$v$-centered route}, we refer to a route $P$ consisting of an odd number of nodes on it such that $v$ is the central node in $P$. For example, refer to the highlighted path in Fig. \ref{fig:string4-2labels}.
\end{itemize}

\noindent $\blacktriangleright$ \textbf{General Idea of the Algorithm:} If a graph has a cycle, to verify cycle existence, a sense of direction among the tree nodes can be encoded using only $2$ distinct proof labels, as long as a node $v$ can see a $v$-centered route $P$ of $5$ nodes with their respective proof labels.
We separately handle the cases when $v$ is a leaf node or a tree node neighboring a leaf node (i.e. when $v$ is not a part of a $v$-centered route, consisting of $5$ nodes).

The trick lies in the repeated use of the binary bit-string $S = 001101$ cyclically (each bit of the string corresponds to a proof label of a node, as depicted in Fig.~\ref{fig:2labels}).
Consider the string $S_\infty=001101001101001101\cdots$. We call the $6$ distinct length-$5$ substrings of $S_{\infty}$ the \emph{base strings}; they are listed in Table \ref{table:refined-membership}.
We denote a base string by $S_b$. 
An important property of $S_\infty$ is that no reverse of a base string is itself a substring of $S_\infty$.

One can refer to Fig. \ref{fig:string4-2labels} to get an idea about how a base string is correlated with a $v$-centered route consisting of $5$ nodes (the highlighted path). 
For the sake of simplicity, when we say that \textit{a node $v$ sees a length-$5$ string within $N_2(v)$} (not necessarily a base string only) if the proof labels of the nodes on a $v$-centered route of $5$ nodes correspond to the length-$5$ string. For example, in Fig. \ref{fig:string4-2labels}, $v$ sees the string $10100$, and in Fig. \ref{fig:string3-2labels}, $v$ sees the string $00100$.

We use these base strings to our advantage, as each of them defines a sense of direction for the central node. For example, let us consider the base string $10100$, which corresponds to a route of $5$ nodes, where the proof label of the central node $v$ is the third bit $1$ in the base string (see Fig. \ref{fig:string4-2labels}). 
Since none of the base strings is a palindrome, upon seeing $10100$, the node $v$ determines that the second bit $0$ corresponds to the proof label of a node, which is its parent. Similarly, the fourth bit of $10100$ corresponds to the proof label of a child of $v$. 
Notice that by definition, any base string corresponds to a $v$-centered route of $5$ nodes.
However, $v$ can be a part of multiple $v$-centered routes within $N_2(v)$, and it is not necessary that each such route corresponds to some base string. Hence it can see multiple length-$5$ strings within $N_2(v)$ (e.g., in Fig. \ref{fig:string3-2labels}, $v$ sees $00100$). 
Let us classify them as follows, and later, we highlight why this classification is beneficial for the verification algorithm. 


\vspace{1mm}
\textbf{Tree String Set:}
When the given graph contains a cycle, the (oracular) proof $L_O^*$ assigns proof labels to the tree nodes (recall Definition \ref{definition:core-tree-node}) in such a way that each tree node $v$ must see a base string $S_b$, thereby establishing a sense of direction (i.e., enabling $v$ to identify its parent and children). 
Let $v_{-2} v_{-1} v v_1 v_2$ be the $v$-centered route on which $v$ sees the base string $S_b$, and $v_{-1}$ is the parent of $v$ according to $S_b$.
Corresponding to the sense of direction the base string $S_b$ has established, any other length-$5$ string (associated with some $v$-centered route of $5$ nodes) should be one of the following types.

\begin{figure}[t]
\begin{minipage}[b]{0.32\linewidth}
    \centering
\includegraphics[width=0.8\linewidth]{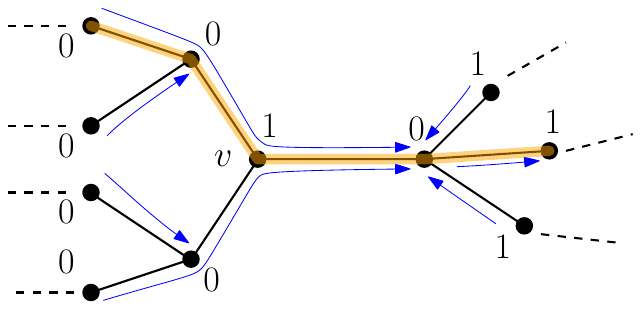}
    \caption{Highlighted route corresponds to tree strings of the form $l_{-2} l_{-1} l_0 l_{1} l_{2}$ (=$10100$ here)}
    
   \label{fig:string4-2labels}
\end{minipage}\hfill
\begin{minipage}[b]{0.32\linewidth}
    \centering
    \includegraphics[width=0.8\linewidth]{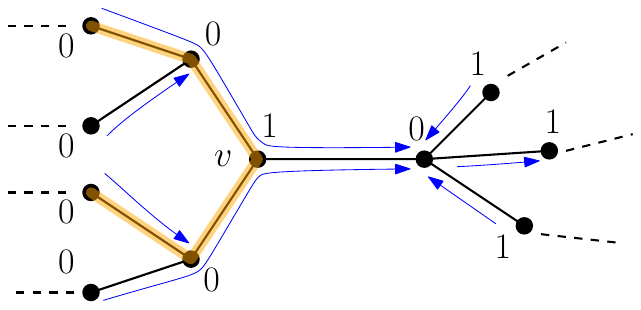}
    \caption{Highlighted route corresponds to a tree string of the form $l_{-2} l_{-1} l_0 l_{-1} l_{-2}$ (= $00100$ here)}
    
    \label{fig:string3-2labels}
\end{minipage}\hfill
\begin{minipage}[b]{0.32\linewidth}
    \centering
\includegraphics[width=0.8\linewidth]{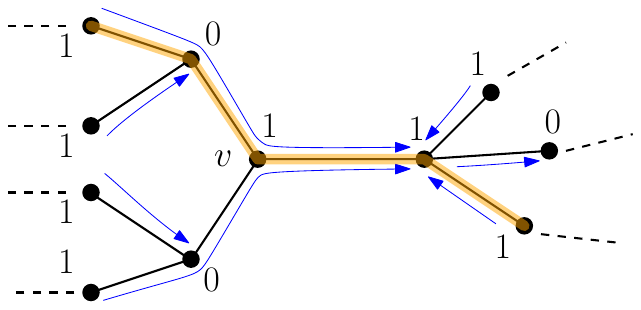}
    \caption{Highlighted route corresponds to tree strings of the form $ l_{0} l_{1} l_0 l_{-1} l_{-2} $ (=$11101$ here)}
    
    \label{fig:string5-2labels}
\end{minipage}
\end{figure}

\begin{enumerate}
    \item \textbf{\textsc{gc-gp string} (abbreviate as grandchild to grandparent string):} This string corresponds to any grandchild to grandparent (according to the direction established by $S_b$) $v$-centered route. One such route is shown in Fig. \ref{fig:string4-2labels}. We denote this as $ l_{-2}l_{-1}l_0l_1l_2$, where the central bit $l_0$ is the proof label of $v$, and $l_{-1}$ (resp. $l_1$) is the proof label of (any) child (resp. the parent $v_1$) of $v$.
    The bit $l_{-2}$ (resp. $l_2$) is the proof label of (any) grandchild (resp. grandparent).
    Needless to mention that a \textsc{gc-gp string} is the base string itself.
    If the route from the grandchild to grandparent is regarded in the reverse direction, we can simply use the reverse of $l_{-2}l_{-1}l_0l_1l_2$ to represent that.
    It is simply a matter of a representative issue and has no effect on the decision algorithm for the node.

    
    \item \textbf{\textsc{gc-gc} string (abbreviated as grandchild to grandchild string):}
    We denote this as $ l_{-2}l_{-1}l_0l_{-1}l_{-2}$. 
    This string corresponds to any $v$-centered route of $5$ nodes from a grandchild to a grandchild (according to the direction established by $S_b$) of $v$. One such route is shown in Fig.~\ref{fig:string3-2labels}.
    The reverse of this type of string falls into this category itself.

    \item \textbf{\textsc{gc-cp string} (abbreviated as grandchild to child of parent string):} We represent this as $ l_{-2}l_{-1}l_0l_{1}l_{0}$, which corresponds to a $v$-centered route of $5$ nodes from a grandchild of $v$ to another child ($\neq v$) of its parent, see Fig.~\ref{fig:string5-2labels}.
    If the route is regarded in the reverse direction, i.e., from a child of the parent to a grandchild, we reverse the above string.

\end{enumerate}

For a base string $S_b$, the \emph{tree string set} $t(S_b)$ consists of any of the above-mentioned types of length-$5$ string (including their reverses). For each base string, Table \ref{table:refined-membership} lists all the tree string sets (in the second column).

\begin{table}[ht]
\footnotesize
\centering

\resizebox{\textwidth}{!}{
\begin{tabular}{|c|l|p{1.6cm}|p{1.9cm}|p{2.3cm}|}
\hline

\textbf{Base String} & \textbf{Tree String Set $t(S_{bi})$} & \textbf{Which $S_{bj}$$\in~$$t(S_{bi})$?} & \textbf{Central Node Label} & \textbf{Label of parent (2nd Bit)} \\ \hline \hline
$S_{b1}=00110$ & $\{00110, 01100, 01110, 01101, 10110\}$ & $S_{b2}$ & 1 & 0\\ \hline
$S_{b2}=01101$ & $\{01101, 10110, 10101, 10111, 11101\}$ & None & 1 & 1 \\ \hline
$S_{b3}=11010$ & $\{11010, 01011, 01010\}$ & None & 0 & 1 \\ \hline
$S_{b4}=10100$ & $\{10100, 00101, 00100\}$ & None & 1 & 0 \\ \hline
$S_{b5}=01001$ & $\{01001, 10010, 10001\}$ & None & 0 & 1\\ \hline
$S_{b6}=10011$ & $\{10011, 11001, 11011, 11000, 00011\}$ & None & 0 & 0\\ \hline
\end{tabular}
}
\caption{The 6 base strings and their respective tree string sets. The third column indicates whether a base string appears in any other tree string set or not.}
\label{table:refined-membership}
\end{table}

Let us now present a scenario to provide intuitive importance of the above classification.

\textbf{A Special Scenario:}
One can ask: \textit{Can $v$ see two different base strings pointing in two different directions, as depicted in Fig \ref{fig:conflicting-base-strings}?} If so, which neighboring node is its parent?

Such a scenario might appear for a node, and that is where the above classification becomes useful. Consider the base string $00110$, as shown in Fig. \ref{fig:conflicting-base-strings}. The intended direction is highlighted with the blue arrow.  
According to this string, the parent of $v$ is $v_2$, and the node $v_4$ is one of the children of $v$. 
However, another $v$-centered route 
$v_5v_4vv_2v_1'$ 
induces a different base string $01101$, which defines $v_4$ as the parent of $v$. 
Observe from Table \ref{table:refined-membership} that $t(00110) = \{00110, 01100, 01110, 10110, 01101\}$. 
We also have $t(01101) = \{01101, 10110, 10101, 10111, 11101\}$.
Clearly, $00110 \notin t(01101)$.
Hence, the node 
$v$ disregards a length $5$ string $S_b$ (in this example $01101$) as the direction-defining base string, if $v$ sees a length-$5$ string within $N_2(v)$ not in $t(S_b)$.

\begin{figure}
    \centering
    \includegraphics[width=0.5\linewidth]{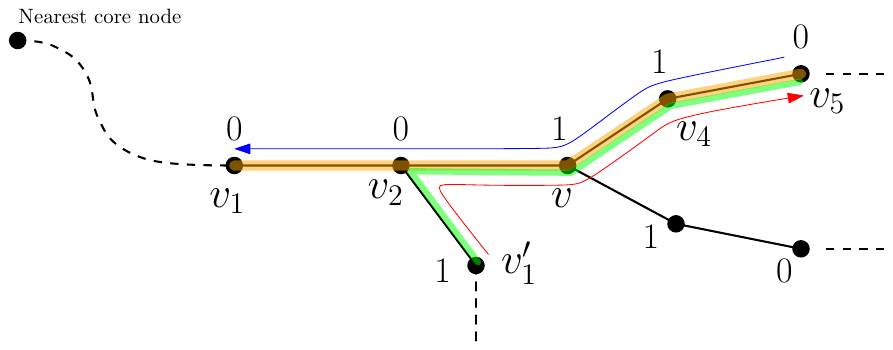}
    \caption{$v$ sees two base strings inducing conflicting directions from two different $v$-centered routes. The direction indicated by the red arrow is discarded since $00110 \notin t(01100)$.
    }
    
    \label{fig:conflicting-base-strings}
\end{figure}

\vspace{1mm}
\textbf{\textsc{Discover-Parent}($v$) Subroutine:} (Algorithm \ref{alg:parent})
Let us handle the easier cases first. 
If $v$ is a leaf, its only neighbor is its parent. 
If $v$ is not a leaf, but it is a tree node and has exactly one neighbor $p$ with degree more than $1$, the node $p$ is its parent.
However, when $v$ is a tree node lying on a $v$-centered route of $5$ nodes, it finds at least one base string under the oracular proof $L_O^*$. 

When $v$ sees a base string $S_b$, rather than immediately taking the decision of acceptance, $v$ performs an additional verification step to check whether there exists any length-$5$ string within its $N_2(v)$ corresponding to some $v$-centered route that does not belong to $t(S_b)$.
If $v$ finds a length-$5$ string within $N_2(v)$ that is not in $t(S_b)$, it disregards $S_b$ and considers another base string.
Table \ref{table:refined-membership} shows which base strings belong to the tree string sets associated with other base strings. However, there is no two base strings $S_{bi}$ and $S_{bj}$ for which both $S_{bi} \in t(S_{bj})$ and $S_{bj} \in t(S_{bi})$ hold true. 
Once verification is complete and $v$ decides on the base string that establishes a direction, $v$ simply identifies the node corresponding to the second bit in that base string as its parent (see the fifth column in Table \ref{table:refined-membership}). 

When $v$ is not a tree node, it tries to determine if it is a core node or not. 
Notice that, in the oracular proof $L_O^*$, the string $000$ (made by the proof labels of three consecutive nodes) can only appear around a core node and does not appear in any base strings as a substring. 
Thus, when $v$ with proof label $0$ finds two neighbors with the proof label $0$, it finds itself as a core node.
If $v$ is neither a tree node nor a core node, it simply rejects at this stage.

\textbf{Verification Algorithm $\mathcal{A}_0$--\textsc{2labels}:} (Algorithm \ref{alg:A_0-algorithm-2labels}) Once the subroutine \textsc{Discover-Parent}() is over, it remains to verify that these local decisions are consistent across neighboring nodes. 
In particular, a core node must be adjacent to at least two other core nodes, and a tree node $v$ must have exactly one parent $p$ such that, from the perspective of $p$, node $v$ is identified as a child of $p$. 
To perform this verification, a node $v$ uses view distance $3$ to simulate the \textsc{Discover-Parent}($u$) algorithm for each neighbor $u \in N_1(v)$ and checks that the resulting decisions agree.

\begin{algorithm2e}[ht]

\If{$v$ is a leaf\label{line:tree_node_start}}
    {
        $v$ returns its only neighbor 
        
    }
\ElseIf{there is exactly $1$ neighbor $p$ of $v$ with $deg(p)>1$}
    {
        $v$ returns $p$ 
        \label{line:tree_node_mid}
    }
\ElseIf{$v$ sees a base string $S_b \sqsubseteq S_\infty$}
    {
        
        \ForEach{$S_b$ within $N_2(v)$}
        {
            Compute $t(S_b)$

            \If{there exists a length $5$ string $S'$ corresponding to a $v$-centered route such that $S' \notin t(S_b)$}
                {
                    continue
                }
            \Else
                {
                    $v$ returns the node $p$ that corresponds to the $2$nd label in string $S_b$ \label{line:tree_node_end}
                }
        }

    }
\ElseIf{$L(v)=0$ and there are two neighbors $u,w$ of $v$ such that $L(u)=L(w)=0$}
    {
        $v$ returns that its a core node
    }
\Else
    {
        $v$ rejects
    }

	\caption{\textsc{Discover-Parent}($v$)}
    \label{alg:parent}

\end{algorithm2e}

\begin{algorithm2e}[ht]

$result \leftarrow $ \textsc{Discover-Parent}$(v)$

\ForEach{$u \in N_1(v)$}
    {
        $result[u] \leftarrow $ \textsc{Discover-Parent}$(u)$
    }

\If{$result = core\ node$ and for at least $2$ nodes $u \in N_1(v)$ we have $result[u] = core\ node$}
    {
        \label{2label-cycle-existence-line-start}
        $v$ accepts
    }
\ElseIf{$result = my\ parent\ is\ p$ \textsc{and} $(result[p] \neq v$ \textsc{or} $result[p] = core\ node)$}
    {
        $v$ accepts\label{2label-cycle-existence-line-end}
    }
    
\Else
    {
        $v$ rejects
    }

\caption{$\mathcal{A}_0$--\textsc{2labels} for node $v$}
\label{alg:A_0-algorithm-2labels}

\end{algorithm2e}

\vspace{1mm}
\noindent $\blacktriangleright$ \textbf{Analysis:}
TWe establish the following theorem.

\begin{theorem}
\label{thm:cycle_2labels_3viewdistance}
    The proof $L_O^*$, consisting of $2$ distinct proof labels, together with the verification algorithm $\mathcal{A}_0$-\textsc{2Labels}, establishes an LCP for verifying cycle existence with every node having view distance $3$.  
\end{theorem}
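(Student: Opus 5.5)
We prove completeness and soundness separately, in the spirit of the proof of Theorem~\ref{thm:cycle_3labels}. The soundness half will follow the ``follow the parent pointers'' argument used there.

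\textbf{Completeness.} Assume $G$ contains a cycle and take the proof $L^*_O$. We first show that \textsc{Discover-Parent}$(v)$ returns the intended answer for every node $v$. There are three kinds of nodes.
\begin{itemize}
\item \emph{Core nodes.} A core node lies on a cycle or on a path between cycles, so it has two core neighbours. All three nodes carry label $0$, so $v$ sees $000$. We must also check that such a $v$ never passes the earlier tree-node branches. A core node has at least two neighbours of degree larger than $1$, so it is not caught by the first two branches. It is also not caught by the base-string branch, because any $v$-centered route through two core neighbours contains $000$, which is not in any tree string set.
\item \emph{Tree nodes near a leaf.} Leaves and tree nodes whose only non-leaf neighbour is the parent are handled by the first two branches, and those branches return the correct parent.
\item \emph{Other tree nodes.} Let $d = dist(v,u)$ to the nearest core node. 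The route from a grandchild to the grandparent reads a length-$5$ window of $S_\infty$, so $v$ sees the correct base string $S_b$.
\end{itemize}
For the last case we must show that every $v$-centered route of $5$ nodes is of one of the three types \textsc{gc-gp}, \textsc{gc-gc} or \textsc{gc-cp}, hence lies in $t(S_b)$ as listed in Table~\ref{table:refined-membership}. Near the core, where labels near distance $0$ interact with core labels $0$, this is a short case check on $d \bmod 6$.

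We also need that no wrongly oriented base string $S_b'$ is accepted first by the loop. The correct string $S_b$ lies in $N_2(v)$, and the paper notes there is no pair $S_{bi},S_{bj}$ with $S_{bi}\in t(S_{bj})$ and $S_{bj}\in t(S_{bi})$. So any competing $S_b'$ either coincides with $S_b$ in direction or is rejected because $S_b\notin t(S_b')$. The one case $S_{b2}\in t(S_{b1})$ must be checked to induce the same parent.

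Given correct outputs everywhere, the consistency checks of $\mathcal{A}_0$--\textsc{2labels} hold, and view distance $3$ suffices to simulate \textsc{Discover-Parent} on all neighbours.

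\textbf{Soundness.} Let $G$ be acyclic and $L$ arbitrary, and suppose every node accepts.
\begin{itemize}
\item \emph{Some node returns ``core''.} Such a node needs two neighbours that are also core. The set of core-returning nodes therefore induces a subgraph of minimum degree $2$, which contains a cycle. This is a contradiction.
\item \emph{All nodes return a parent.} The acceptance condition is meant to say that a node's parent does not point back to it, or that the parent is core. Read this way, the parent map $v\mapsto p(v)$ has no $2$-cycles. Following $v_0, p(v_0), p(p(v_0)),\dots$ in a finite tree, consecutive nodes are adjacent and never immediately backtrack. This gives an infinite route in a finite acyclic graph, a contradiction, exactly as in Theorem~\ref{thm:cycle_3labels}.
\end{itemize}

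\textbf{Main obstacle.} The main obstacle is the completeness case analysis: verifying that under $L^*_O$ every tree node, including those within distance $\le 2$ of the core and those adjacent to leaves, sees only strings in $t(S_b)$ for its true $S_b$, and that no competing base string survives the filter. I would organise this by $d \bmod 6$ and by whether $v$'s children or grandchildren exist. I would also check the transition at the core, where a node at distance $6$ (label $0$) sits next to core labels $0$, against the $000$ test in the fourth branch.
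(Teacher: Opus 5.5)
Your proposal follows the paper's proof. Completeness goes through the correctness of \textsc{Discover-Parent} under $L_O^*$ (Lemma~\ref{lem:correctness-algorithm2}), using the tree string sets of Table~\ref{table:refined-membership} and the absence of mutually contained base strings. Soundness goes through the minimum-degree-$2$ argument for core nodes and then chasing parent pointers. Your soundness argument is slightly tighter than the paper's: you use $p(p(v))\neq v$ explicitly, which excludes the degenerate repetition $v_j=v_{j+2}$ that the paper's ``$v_j=v_{j'}$ gives a cycle'' step ignores. The $d \bmod 6$ check you defer does go through; it is exactly what the table encodes.

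Two local statements are wrong as written and need repair.

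\emph{First (core nodes).} The claim that $000$ is not in any tree string set is false: $10001\in t(S_{b5})$ and $00011,11000\in t(S_{b6})$. Moreover, $10001$ has exactly the form ${?}000{?}$ of a route through two core neighbours. The fix is to note that a core neighbour $x_1$ of a core node $v$ has a second core neighbour $w_1\neq v$. The route $w_1x_1vx_2w_2$ then reads $0000{?}$, and no tree string contains $0000$. You are right that this check is needed. A core node can see a base string, e.g.\ $10011=S_{b6}$ along (tree neighbour of $x_1$)--$x_1$--$v$--$t$--(child of $t$). The paper's Case~2 silently assumes the core test is still reached, which requires your fall-through reading of the pseudocode.

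\emph{Second (the $S_{b2}$ case).} The case $S_{b2}\in t(S_{b1})$ does \emph{not} induce the same parent. A node $v$ at distance $\equiv 1 \pmod 6$ reads $01101=S_{b2}$ along grandchild--child--$v$--parent--sibling, and the second position of that route is a child. This string is eliminated by the general argument you already gave: $v$ also sees $00110=S_{b1}$, and $S_{b1}\notin t(S_{b2})$. So there is no exceptional case to check, and trying to show ``same parent'' here would fail.
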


\begin{proof}[Proof of Theorem~\ref{thm:cycle_2labels_3viewdistance}]
    We prove completeness and soundness by analyzing cyclic and acyclic graphs separately. Before that, the following lemma first proves the correctness of the subroutine \textsc{Discover-Parent}($v$), which helps us establish the completeness property.

    \begin{lemma}
    \label{lem:correctness-algorithm2}
        Under $L_O^*$, every node $v$ correctly determines whether it is a core node or a tree node using \textsc{Discover-Parent}($v$), and in case of $v$ being a tree node, it identifies its parent.
    \end{lemma}

    \begin{proof}
        Let us take an arbitrary node $v$. We distinguish the following two cases depending on whether $v$ is a tree node or a core node.

        $\blacktriangleright$ \textbf{Case 1 ($v$ is a tree node):} We show that \textsc{Discover-Parent}($v$) has node $v$ return its parent. 
        Let us deal with the two cases (corresponding to Lines \ref{line:tree_node_start}-\ref{line:tree_node_mid} of Algorithm \ref{alg:parent}), which are quite immediate, where node $v$ is not part of some $v$-centered route of $5$ nodes.
        If $v$ is a leaf node, the parent of $v$ is the only neighbor of $v$, and hence it is identified as the parent by \textsc{Discover-Parent}($v$).
        If instead, when $v$ is not a leaf node, but has exactly one neighbor $p$ such that $deg(p) > 1$, node $p$ is the parent of $v$ by \textsc{Discover-Parent}($v$). Hence, in both situations, the lemma holds. 
        
        We now concentrate on a more prominent case where $v$ sees a base string $S' \sqsubseteq S_\infty$. 
        Let $u$ be the nearest core node to $v$ and $dist(v,u) = k$. 
        We prove the following three claims.

        \begin{claim}
        \label{claim:unique-u}
         For any $w \in N_2(v)$ with $w \neq v$, node $u$ is the nearest core node to $w$, given the fact that $u$ is the nearest core node to $v$.
             
        \end{claim}

        \begin{proof}
            
        Assume for contradiction that there is a core node $u' \neq u$ such that $dist(w, u') \leq dist(w, u)$.
        Without loss of generality, we may assume that $dist(w,u) \geq dist(v,u)$. 
        If a path $P$ from $u'$ to $u$ contains node $v$, then all nodes on $P$ must be core nodes, contradicting the assumption that $v$ is a tree node. 
        If instead, the path $P$ from $u'$ to $u$ does not contain $v$, both the paths from $v$ to $u'$ and from $v$ to $u$ must pass through $w$. 
        Recall that all core nodes in $G$ form a single connected component and hence $u$ and $u'$ are connected via a path of core nodes. This path, together with the two paths, one from $w$ to $u'$ and the other from $w$ to $u$, forms a cycle. 
        This indicates that $w$ is a core node, contradicting the choice of $u$ as the core node nearest to $v$. This proves Claim \ref{claim:unique-u}.
        \end{proof}
    In continuation of the above discussion, we now prove the following claim.

    \begin{claim}
    \label{claim:unique-parent}
     There is exactly one node $x \in N_2(v)$ with $dist(x, u) = k-1$.
     \end{claim}

    \begin{proof}
        
    Assume to the contrary that there is another node $x'\in N_2(v)$ with $dist(x', u) = k-1$ and $x' \neq x$.
    Consider the two paths from $u$ to $v$: one passing through the node $x$, and the other passing through node $x'$.
    If these two paths are edge-disjoint, they constitute a cycle, leading to a contradiction to the assumption that $v$ is a tree node.
    Otherwise, the two paths share at least one edge and hence intersect at some node. 
    Any such node lies on a cycle formed by the union of the two paths, and therefore must be a core node. 
    This contradicts the choice of the node $u$ as the core node nearest to $v$, since such a core node would be closer to $v$ than $u$.
    Thus, there exists exactly one $x \in N_2(v)$ with $dist(x, u) = k-1$. 
    This completes the proof of Claim \ref{claim:unique-parent}.   
    \end{proof}

        By the similar argument presented in Claim \ref{claim:unique-parent}, we can say that any node $w \in N_1(v)$ cannot have the same distance ($k$) from $u$ as $v$, i.e., $dist(w,u) \neq dist(v,u)$.
        Hence, any neighboring node of $v$ has either the distance $k-1$ or $k+1$ from $u$.
        Moreover, the similar argument also establishes that there is exactly one node at $k-2$ distance\footnote{With a little abuse of notation, if the node $v$ sees the base string $00110$, with the central bit being the label of $v$, and $dist(v,u) = 1$, by a node at $k-2$ distance away from $u$, we mean the node whose label is the first bit $0$. In such a case, this node is actually a core node, whose one of the neighbors is $u$.} away from $u$. 

        Let us consider the $v$-centered route $P= v_1v_2v_3v_4v_5$ with $v_3 = v$ and $v_1$ (resp. $v_2$) is the node at $k-2$ (resp. $k-1$) distance away from $u$. The nodes $v_4$ and $v_5$ are any two nodes such that $dist(v_4, u) = k+1$, $dist(v_5, u) = k+2$, and $v_5$ is a neighbor of $v_4$.

        Under the proof $L_O^*$, the nodes on $P$ must receive labels such that they constitute a base string $S_{bi}$ for $i\in \{1,2,\dots, 6\}$ (see Table \ref{table:refined-membership}).
        Now our concern would be: can there be any other base string $S_{bj}$, $i\neq j$, such that it points to another neighbor (may or may not be on $P$) of $v$ as the parent and \textsc{Discover-Parent}$(v)$ returns $w$?
        Notice from Table \ref{table:refined-membership} (see the third column), that for any two base strings $S_{bi}$ and $S_{bj}$ with $i\neq j$ and $i,j \in \{1, 2,\dots 6\}$, we cannot have $S_{bi} \in t(S_{bj})$ and $S_{bj} \in t(S_{bi})$ simultaneously.
        It is also clear that none of the base strings is a palindrome. 
        Hence, there is exactly one $v$-centered route that induces a base string such that any other length $5$ string corresponding to a $v$-centered route belongs to the three string set of the base string. \textsc{Discover-Parent}$(v)$ returns the parent whose label is the second bit of the base string.

        $\blacktriangleright$ \textbf{Case 2 ($v$ is a core node):} By construction of $L_O^*$, node $v$ lies on a cycle or on a path connecting two cycles.
        Hence, there exist two neighbors $u$ and $w (\neq u)$ of $v$ such that $L_O^*(v)=L_O^*(u)=L_O^*(w)=0$. Hence, it identifies itself as a core node, according to \textsc{Discover-Parent}($v$).
        
This completes the proof of Lemma \ref{lem:correctness-algorithm2}.
 \end{proof}

 \begin{lemma}
    \label{lem:2labels-3view-graph-with-cycles}
        
        If $G$ contains a cycle, every node accepts under $L_O^*$ according to the algorithm $\mathcal{A}_0$-\textsc{2Labels}.
    \end{lemma}

    \begin{proof}
        $L_O^*$ assigns the proof label $0$ to all core nodes. 
        When node $v$ detects itself as a core node (the correctness of such detection is by Lemma \ref{lem:correctness-algorithm2}) and simulates the same subroutine \textsc{Discover-Parent}() for every neighbor, two of its neighbors are found to be core nodes.
        Such a simulation for neighbors is possible because $v$ has a view distance of $3$. 
        If $v$ determines itself as a tree node (and thus returns its parent $p$ by Lemma \ref{lem:correctness-algorithm2}) and runs the same sub-routine for
        its parent $p$ to check for consistency (i.e., whether $p$ identifies $v$ as its child or not), $v$ determines that the parent of $p$ is not $v$ in case of $p$ being a tree node (else it is a core node). Such a check is possible because $ v$ has a $3$-hop view distance.
        Hence, all nodes accept when $G$ contains a cycle. 
    \end{proof}

 \begin{lemma}
 \label{lem:2labels-3view-graph-without-cycles}
     If $G$ does not contain a cycle, there is a node that rejects for any given proof $L$ according to the algorithm $\mathcal{A}_0$-\textsc{2Labels}.
 \end{lemma}

 \begin{proof}
    Assume for the sake of contradiction that all nodes accept under some proof $L$ on $G$ according to Algorithm $\mathcal{A}_0$-\textsc{2labels}.
    Now the question is: Can there be a node in $G$ that accepts by identifying itself as a core node under the proof $L$? Let us assume that such a core node exists in $G$.
    By definition, all core nodes form a single connected component in $G$.
    Observe that each core node must have at least two other core nodes as its neighbors.
    Therefore, the minimum degree of any node in the subgraph induced by the core nodes is at least $2$.
    However, since the graph $G$ is acyclic, every non-empty induced subgraph must contain a node of degree at most $1$.
    Therefore, an induced subgraph with minimum degree at least $2$ cannot exist inside $G$.
    This means that if $G$ is acyclic, there cannot be any core node. 

    Consequently, all nodes identify themselves as tree nodes under $L$.
    Let us pick one such tree node $v_0$. Since it accepts, it must have determined some node $v_1$ as its parent under the proof $L$. Applying the same reasoning repeatedly on node $v_1$ and subsequently on its parent $v_2$, we get a sequence of nodes $v_0, v_1, v_2, \dots$. 
    Since $G$ is finite and all tree nodes decide to accept under $L$, there exist indices $j < j'$ in the sequence such that $v_j = v_{j'}$.
    Hence, the subsequence $v_j, \dots v_{j'}$ forms a cycle, a contradiction to the fact that $G$ is acyclic.
    This completes the proof.
 \end{proof}

 By Lemma \ref{lem:2labels-3view-graph-with-cycles} and \ref{lem:2labels-3view-graph-without-cycles}, we prove completeness and soundness properties, and this ends the proof of Theorem \ref{thm:cycle_2labels_3viewdistance}.
 \end{proof}

\section{Algorithmic Framework for Proofs with Errors: \framework{} (\textbf{\texttt{r}}esilient \textbf{\texttt{e}}rror \textbf{\texttt{f}}ixing)}
    \label{subsec:framework}

We now consider the setting in which the adversary may introduce errors in the oracular proof $L_O$.
We assume that within the neighborhood $N_{2i+1}(v)$ for each node $v$, the adversarial proof $L_{adv}$ can differ from the oracular proof $L_O$ in at most $i$ nodes.

We want to design an algorithmic framework for verifying a graph property $\mathcal{P}$ that can tolerate the aforementioned adversarial modifications. In this process, we also assume that each node has a view distance of $2i+1$.
It is evident that we cannot restrict the view distance of the nodes to $1$ (like we often see it in the classical proof labeling schemes in the literature), as the adversary could introduce errors in the proof labels of all neighbors of a node (when the degree of the node is less than the number of errors $i$), and such a node can easily be misguided in taking the decision (accept or reject). 
We discuss more on this when we establish the lower bounds (or impossibilities) on the required view distance in Sec. \ref{sec:lower-bound-erroneous-labelings} to understand the relation between the number of errors $i$ and the view distance.

Let $\mathcal{P}$ be a graph property to be verified under the LCP model. 
Let $\mathcal{A}_0$ be the \emph{base verification algorithm} to verify $\mathcal{P}$ under an oracular proof $L_O$, when $i = 0$.
Consider Algorithm \ref{alg:framework}, which we refer to as \framework{} (abbreviation of \textbf{\texttt{r}}esilient \textbf{\texttt{e}}rror \textbf{\texttt{f}}ixing) Framework.
It takes as input the base verification algorithm $\mathcal{A}_0$ and $i$, and applies a generic correction-and-check procedure to tolerate errors introduced by the adversary, as follows.

\begin{algorithm2e}
    Check every possible proof $l_v$ on the nodes in $N_{2i+1}(v)$ that differs from $L_{adv}$ in at most $i$ nodes within $N_{2i+1}(v)$ 

    \If{there is a proof $l_v$ such that all the nodes within $N_{2i}(v)$ accept according to $\mathcal{A}_0$}
        {
            $v$ accepts
        }

    \Else
        {
            $v$ rejects            
        }

	\caption{\framework{}-Framework~($\mathcal{A}_0$, $i$)}
 \label{alg:framework}
\end{algorithm2e}

\noindent $\blacktriangleright$ \textbf{General Idea.} 
In order to describe the framework, let us first define the notion of \emph{imagined proof} from the perspective of the node $v$, which captures the notion that each $v$ looks for corrections to the proof that it sees on the subgraph induced by the nodes in $N_{2i+1}(v)$. We will use this notion often in the future sections.

\begin{definition}[Imagined Proof]
    \label{def:imagine}
        A proof $l_v$ is called an imagined proof for node $v$ on the nodes within $N_{2i+1}(v)$, if $l_v$ is the proof that causes $v$ to accept, i.e., $l_v$ differs from $L_{adv}$ in at most $i$ nodes within $N_{2i+1}(v)$ and under $l_v$, all nodes within $N_{2i}(v)$ accept according to the base algorithm $\mathcal{A}_0$.
        To refer to this, we sometimes say that $v$ imagines $l_v$.        
\end{definition}

Intuitively, the imagined proof for a node $v$ means that $v$ locally modifies the proof labels of at most $i$ nodes within its view distance of $2i+1$, so that if the adversary has introduced at most $i$ errors within its $N_{2i+1}(v)$, it can find a correction for the proof labels of the nodes within $N_{2i+1}(v)$ and check whether all nodes within $N_{2i}(v)$ accept according to the base algorithm (which works without any errors) $\mathcal{A}_0$.
If such an imagined proof on $N_{2i+1}(v)$ exists, $v$ accepts, and rejects otherwise. 
The check is limited to $N_{2i}(v)$ (rather than $N_{2i+1}(v)$) because modifying a proof label of some node $u$ in $v$'s local computation may affect the decisions of $u$ and its neighbors under $\mathcal{A}_0$, and nodes on the boundary $N_{2i+1}(v) \setminus N_{2i}(v)$ may have neighbors outside $v$'s view whose labels are unknown to $v$ (and hence it cannot be sure about their decision).


\noindent $\blacktriangleright$ \noindent\textbf{Analysis of the Framework.} We analyze the correctness of the framework. We separate the analysis into a problem (or property $\mathcal{P}$)-specific component and a problem-independent component.
Before explaining more about that, let us start the analysis with the following lemma, which proves the completeness property. 

\begin{lemma}
    \label{lemma:graph-with-property-prediction-model}
    When a graph $G$ has the property $\mathcal{P}$, \framework{} makes every node $v$ accept even under an adversarial proof $L_{adv}$ with $v$ having a view distance of $2i+1$, where $\mathcal{A}_0$ is the algorithm for verifying $\mathcal{P}$ under an oracular proof $L_O$ on $G$ with a view distance $1$ for every node and $L_{adv}$ differs from $L_O$ in at most $i$ nodes within $N_{2i+1}(v)$.
\end{lemma}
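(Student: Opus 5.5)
The plan is to exhibit, for every node $v$, one particular imagined proof in the sense of Definition~\ref{def:imagine}. The natural choice is the oracular proof itself, $l_v := L_O$ restricted to $N_{2i+1}(v)$. The whole argument then reduces to checking the two requirements of that definition for this one candidate.

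\emph{Budget check.} By the error model, $L_{adv}$ differs from $L_O$ in at most $i$ nodes of $N_{2i+1}(v)$. So the candidate $l_v = L_O|_{N_{2i+1}(v)}$ lies within the search space of Algorithm~\ref{alg:framework}: it is obtained from $L_{adv}$ by changing at most $i$ labels inside $N_{2i+1}(v)$. Node $v$ can evaluate this candidate, since all the relevant labels and the induced subgraph $G[N_{2i+1}(v)]$ are within its view distance $2i+1$. Node $v$ does not know which candidate is correct, but the framework only asks whether \emph{some} candidate succeeds.

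\emph{Acceptance on $N_{2i}(v)$.} Since $G \in \mathcal{P}$, completeness of $\mathcal{A}_0$ under $L_O$ gives $\mathcal{A}_0(G[N_1(u)], L_O[N_1(u)], u) = 1$ for every node $u$ of $G$.

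Now fix $u \in N_{2i}(v)$. Its closed neighborhood satisfies $N_1(u) \subseteq N_{2i+1}(v)$. Moreover, $G[N_1(u)]$ is an induced subgraph of $G[N_{2i+1}(v)]$, so every edge among nodes of $N_1(u)$ is visible to $v$. Hence the view of $u$ that $v$ simulates under $l_v$, namely $(G[N_1(u)], l_v[N_1(u)], u)$, coincides exactly with $u$'s true view under $L_O$. Therefore the simulated run of $\mathcal{A}_0$ at $u$ outputs $1$. Since this holds for all $u \in N_{2i}(v)$, $l_v$ is an imagined proof, and $v$ accepts. As $v$ was arbitrary, every node accepts.

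\emph{Main obstacle.} The only delicate point is the locality step: ensuring the simulated views are faithful. This is where view distance $1$ for $\mathcal{A}_0$ and the restriction to $N_{2i}(v)$ rather than $N_{2i+1}(v)$ are used. If $\mathcal{A}_0$ had view distance $k$, the same argument would require checking only $N_{2i+1-k}(v)$.

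There is a secondary subtlety for base algorithms like $\mathcal{A}_0$-\textsc{2labels}, which internally use view distance $3$. The lemma as stated assumes view distance $1$ for $\mathcal{A}_0$, so I would state the argument for that case and note the general version (shrink the checked ball by $k$) as a remark. Beyond this, no property-specific reasoning is needed; soundness is left to the problem-specific analysis.
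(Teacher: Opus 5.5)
Your proposal is correct and matches the paper's proof: both take $L_O$ restricted to $N_{2i+1}(v)$ as the witness, note that it lies within the correction budget by the error model, and conclude that it is an imagined proof because $\mathcal{A}_0$ accepts everywhere under $L_O$. Your explicit check that $N_1(u)\subseteq N_{2i+1}(v)$ for every $u\in N_{2i}(v)$, so that the simulated views are faithful, is a step the paper's proof leaves implicit.
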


\begin{proof}
    This is quite straightforward.
    Consider an arbitrary node $v$. 
    It checks every possible proof $l_v$ on the subgraph induced by the nodes within $N_{2i+1}(v)$ that differs from $L_{adv}$ in at most $i$ nodes in $N_{2i+1}(v)$. 
    Therefore, $v$ checks the proof that is identical to $L_O$ on all nodes in $N_{2i+1}(v)$.
    Since $L_O$ is free from errors and all nodes in $G$ accept under $L_O$ according to the verification algorithm $\mathcal{A}_0$, $v$ accepts according to \framework{}.
\end{proof}

We now move to the soundness property. 
As mentioned at the start of the analysis, by the problem-specific component, we mean the following proposition, which is relevant in the proof of the soundness property (i.e., when a graph $G \notin \mathcal{P}$).

\begin{proposition}
\label{prop:diff_view_dist}
    Consider any two adjacent nodes $u,v$ such that they imagine (recall Definition \ref{def:imagine}) proofs $l_u$ and $l_v$ respectively.
    Then $l_u(u)=l_v(u)$ and $l_u(v)=l_v(v)$.
\end{proposition}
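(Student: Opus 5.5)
The plan is to compare the two imagined proofs on the region both nodes can see, and to reduce the statement to a single rigidity property of the base verifier $\mathcal{A}_0$. Since $u$ and $v$ are adjacent, $N_{2i}(u)\cup N_{2i}(v)\subseteq N_{2i+1}(u)\cap N_{2i+1}(v)$, so both $l_u$ and $l_v$ are defined on the union $B:=N_{2i}(u)\cup N_{2i}(v)$ and on its $1$-hop boundary. I would argue by contradiction: assume $l_u(u)\neq l_v(u)$. The argument for $l_u(v)\neq l_v(v)$ is symmetric, with the roles of $u$ and $v$ swapped.

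The first step is a counting bound. By Definition~\ref{def:imagine}, $l_u$ differs from $L_{adv}$ in at most $i$ nodes of $N_{2i+1}(u)$, and likewise $l_v$ in at most $i$ nodes of $N_{2i+1}(v)$. By the triangle inequality, $l_u$ and $l_v$ therefore disagree in at most $2i$ nodes of the common region $N_{2i+1}(u)\cap N_{2i+1}(v)$. Moreover, under $l_u$ every node of $N_{2i}(u)$ accepts according to $\mathcal{A}_0$, and under $l_v$ every node of $N_{2i}(v)$ accepts.

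The second step, which is the heart of the argument, is a rigidity statement for $\mathcal{A}_0$. Suppose two labelings are each accepted by $\mathcal{A}_0$ at every node of a ball of radius $2i$ around a node $w$. Suppose also that they disagree in at most $2i$ nodes of the ball. Then they must agree at $w$. I would prove this by propagation along routes. A disagreement at the centre $w$ forces a disagreement at some neighbour, because acceptance under $\mathcal{A}_0$ ties the centre's label to its neighbours' labels. Iterating along a route going outward from $w$ produces a chain of forced disagreements. Such a chain must either reach distance $2i+1$ or consume more than $2i$ disagreeing nodes, and either outcome contradicts the counting bound of the first step. This is exactly the point where the radius $2i+1$ of the error model is used: it guarantees that such a chain cannot escape the ball before exhausting the error budget.

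Applying the rigidity statement with $w=u$ to the pair $(l_u,l_v)$ then yields $l_u(u)=l_v(u)$, and applying it with $w=v$ yields $l_u(v)=l_v(v)$. The main obstacle is the propagation step inside the rigidity statement: the claim that a local disagreement at an accepting node forces a disagreement at a neighbour. This is the only place where the structure of $\mathcal{A}_0$ enters, consistent with the paper calling the proposition the problem-specific component. I would therefore first isolate the required property of $\mathcal{A}_0$ precisely, namely that the accepting label of a node is determined by the labels of its neighbours. I would then check that the base verifiers for cycle existence, cycle-freeness and bipartiteness satisfy it, e.g.\ the $3$-label scheme of Theorem~\ref{thm:cycle_3labels}, where a tree node's label must be one more than its parent's label modulo $3$.
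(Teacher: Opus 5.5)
Your counting step is correct, but the rigidity step, which carries the whole proof, does not follow from the argument you give. The property you plan to isolate is too weak for it. ``The accepting label of a node is determined by its neighbours' labels'' yields only this: if $x$ disagrees, then \emph{some} neighbour of $x$ disagrees. That neighbour may be the one the chain just came from, so the chain can stall or bounce. Neither ``reaches distance $2i+1$'' nor ``uses more than $2i$ distinct nodes'' follows.

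This already happens for $\mathcal{A}_0$-\textsc{cycle}, which does have your determination property. On a path $\dots,y,x,z,z',\dots$ take $l_u=(\dots,a-1,a,a+1,a+2,\dots)$ and $l_v=(\dots,a+3,a+2,a+1,a,\dots)$, on a window where all labels are positive. Both labellings are accepted at every inner node, and they agree only at $z$. Yet a chain arriving at $x$ from $y$ dies there, because $y$ is the only disagreeing neighbour of $x$. Chains running into a leaf also end legitimately. Which direction carries enough disagreements depends on global data (where each imagined proof puts its root or $0$-component), not on a local rule. Moreover, $l_v$ is only known to be accepted on $N_{2i}(v)$, not on all of $N_{2i}(u)$, so your lemma's hypothesis is not even met with $w=u$.

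What is missing is a non-local rigidity fact together with the hypothesis $G\notin\mathcal{P}$. Your general lemma never uses that hypothesis, but it is the only setting in which the paper proves, and needs, the proposition. For cycle existence the paper uses two facts. First, $G$ is acyclic, so $T_{u,v}$ is a tree. Second, Claim~\ref{lem:path_labeling}: an accepted labelling of a path is unique once its two endpoint labels are fixed. In particular, oppositely oriented labellings of a path agree in at most one node. If $u,v$ are the middle pair of a path on $4i+2$ nodes, each half of $2i+1$ nodes contains a node where $l_u$, $l_v$ and $L_{adv}$ coincide. Uniqueness between these two nodes then gives agreement at $u$ and $v$. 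When $v$ is too close to a leaf for such a path to exist, the paper compares the roots or $0$-components of $T_{u,v}$ under $l_u$ and $l_v$, and finds a path on at least $2i+2$ nodes with at most one agreement. Cycle-freeness additionally needs that acceptance on $N_{2i}(u)$ forces that region to be acyclic.

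Only for bipartiteness is a local argument enough, and even there it works through connectivity rather than a chain. Two proper $2$-colourings of the connected set $N_{2i}(u)\cap N_{2i}(v)$ agree either everywhere or nowhere on it. Note also that the error-tolerant setting uses exact BFS distances, not the mod-$3$ scheme you cite. With labels mod $3$, oppositely oriented labellings agree on every third node of a path, so the ``at most one agreement'' count breaks.
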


We assume Proposition~\ref{prop:diff_view_dist} and prove soundness below. This proposition will be proved in the next section (Sec. \ref{sec:erroneous-labeling}), individually for each verification problem: cycle existence, cycle-freeness and bipartiteness, as it requires the inherent properties of their LCPs. 
We emphasize that it does not matter in soundness whether $L_{adv}$ differs from $L_O$ in at most $i$ nodes within $N_{2i+1}(v)$ for every node $v$. 
Even if $L_{adv} = L_O$, some node must reject if $G \notin \mathcal{P}$.

\begin{lemma}
    \label{lemma:graph-without-property-prediction-model}
    When a graph $G$ does not have the property $\mathcal{P}$, for any proof $L_{adv}$ (possibly adversarial or not), \framework{} makes at least one node reject.
\end{lemma}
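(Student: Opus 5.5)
We argue by contradiction, using Proposition~\ref{prop:diff_view_dist} to assemble the local corrections into one global proof. Let $G \notin \mathcal{P}$, let $L_{adv}$ be arbitrary, and suppose every node $v$ accepts under \framework{}. Then each $v$ has an imagined proof $l_v$ on $N_{2i+1}(v)$: it differs from $L_{adv}$ in at most $i$ nodes, and under it every node of $N_{2i}(v)$ accepts by $\mathcal{A}_0$. If several such proofs exist, we fix one for each $v$. We then define a single global proof $L^\star$ by $L^\star(v) := l_v(v)$ for every $v \in V$, letting each node keep the label it imagines for itself.

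The key step is to show that every node $v$ accepts under $\mathcal{A}_0$ with the proof $L^\star$. Here $\mathcal{A}_0$ has view distance $1$, as in Lemma~\ref{lemma:graph-with-property-prediction-model} and in the base LCPs to which the framework is applied. So the decision of $v$ depends only on $G[N_1(v)]$ and the labels on $N_1(v)$.

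\begin{itemize}
\item For $v$ itself, $L^\star(v) = l_v(v)$ by definition.
\item For a neighbour $u$ of $v$, Proposition~\ref{prop:diff_view_dist} applied to the adjacent pair $u,v$ gives $l_u(u) = l_v(u)$. Hence $L^\star(u) = l_v(u)$.
\end{itemize}

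Thus $L^\star$ and $l_v$ agree on all of $N_1(v)$, and the view of $v$ under $L^\star$ equals its view under $l_v$. Since $v \in N_{2i}(v)$ and $l_v$ is imagined, $v$ accepts under $l_v$ and therefore under $L^\star$. As $v$ was arbitrary, all nodes accept $L^\star$ under $\mathcal{A}_0$. This contradicts the soundness of the base LCP $(\mathcal{A}_0, L_O)$, which requires a rejecting node for every proof on a graph $G \notin \mathcal{P}$. Hence some node rejects under \framework{}.

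The argument never uses the bound on how many errors lie in $L_{adv}$, consistent with the remark that soundness must hold for any $L_{adv}$. All the real difficulty is in the problem-specific Proposition~\ref{prop:diff_view_dist}, which makes the independently chosen local corrections agree on adjacent nodes, so the lemma itself is only a gluing step. The one point to check is the view-distance bookkeeping: the gluing works because $\mathcal{A}_0$ reads only $1$-hop labels, so agreement on $N_1(v)$ suffices. If a base verifier had larger view distance $k$, we would need agreement between $l_v$ and $L^\star$ on all of $N_k(v)$, which Proposition~\ref{prop:diff_view_dist} does not give for non-adjacent nodes. In that case we would restrict to the view-distance-$1$ setting, or strengthen the proposition to nodes within distance $k$.
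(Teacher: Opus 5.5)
Your proof is correct and is essentially the paper's argument. You define the global proof by $L(v)=l_v(v)$, use Proposition~\ref{prop:diff_view_dist} to see that it agrees with $l_v$ on $N_1(v)$, and conclude that every node accepts it under $\mathcal{A}_0$, contradicting the soundness of the error-free LCP (your remark that this gluing needs $\mathcal{A}_0$ to have view distance $1$ matches the standing assumption in Theorem~\ref{thm:correctness-framework}).
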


\begin{proof}

    Assume to the contrary that there is an adversarial proof $L_{adv}$ such that all nodes in $G$ accept according to \framework{}. We now construct a proof $L$ on $G$ using this assumption and show that all nodes in $G$ accept according to the base verification algorithm $\mathcal{A}_0$ (which works correctly when there are no errors), which leads us to a violation of the soundness property of the error-free version of the LCP.

    Since all the nodes accept, then each node $v$ imagines a proof $l_v$ on the subgraph induced by the nodes in $N_{2i+1}(v)$.
    We define a proof $L$ on $G$ such that $L(v)=l_v(v)$ for each node $v$, i.e., the proof $L$ assigns the imagined proof label of a node $v$ onto itself.
    
    Now we show that $\mathcal{A}_0$ makes all nodes accept under $L$. 
    Consider any node $v$ and an arbitrary neighbor $u$ of $v$. 
    Since $v$ accepts according to \framework{} with an imagined proof $l_v$, then it checks that all nodes, including $v$ itself in $N_{2i}(v)$, accept according to $\mathcal{A}_0$. Since we assumed that Proposition~\ref{prop:diff_view_dist} holds true, we have $l_v(u)=l_u(u)=L(u)$ for all $u \in N_1(v)$. 
    Thus, the proof $L$ on nodes in $N_1(v)$ is consistent with the proof $l_v$ and node $v$ accepts $L$ according to $\mathcal{A}_0$. Therefore, every node $v$ in $G$ accepts under the proof $L$ according to $\mathcal{A}_0$, a contradiction for a graph not satisfying $\mathcal{P}$.
    \end{proof}

Lemma~\ref{lemma:graph-with-property-prediction-model} (completeness) and~\ref{lemma:graph-without-property-prediction-model} (soundness) together establish the following theorem.

\begin{theorem}
\label{thm:correctness-framework}
Let $\mathcal{A}_0$ be the algorithm for verifying a graph property $\mathcal{P}$ under an oracular proof $L_O$ with a view distance $1$ for every node of a given graph $G$ and the proof $L_{adv}$ differ from $L_O$ in at most $i$ nodes within $N_{2i+1}(v)$.
The proof $L_{adv}$ together with \framework{} establishes an LCP for verifying $\mathcal{P}$ with every node having a view distance of $2i+1$, under the assumption that Proposition~\ref{prop:diff_view_dist} holds if $G \notin \mathcal{P}$.
\end{theorem}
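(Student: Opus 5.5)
The plan is to prove the two halves of the LCP definition in the erroneous-proof model separately and then combine them. Completeness is Lemma~\ref{lemma:graph-with-property-prediction-model}. Soundness is Lemma~\ref{lemma:graph-without-property-prediction-model}, which is conditional on Proposition~\ref{prop:diff_view_dist}. The theorem is then their conjunction, with the view distance of \framework{} read off from the algorithm.

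For completeness, take $G\in\mathcal{P}$, the oracular proof $L_O$, and any $L_{adv}$ that differs from $L_O$ in at most $i$ nodes of $N_{2i+1}(v)$ for every $v$.
\begin{enumerate}
\item Fix a node $v$. The restriction of $L_O$ to $N_{2i+1}(v)$ differs from $L_{adv}$ in at most $i$ nodes there, so it is one of the candidate proofs $l_v$ that \framework{} enumerates.
\item Under this candidate, consider any $w\in N_{2i}(v)$. The $1$-hop view of $w$ lies inside $N_{2i+1}(v)$, so $v$ can evaluate $\mathcal{A}_0$ at $w$ using labels that coincide with $L_O$.
\item Since $\mathcal{A}_0$ accepts every node under $L_O$ by completeness of the base LCP, every $w\in N_{2i}(v)$ accepts. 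Hence $v$ finds an imagined proof and accepts.
\end{enumerate}
The only point to check is the radius bookkeeping in the second step, namely $N_1(w)\subseteq N_{2i+1}(v)$ whenever $w\in N_{2i}(v)$.

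For soundness, take $G\notin\mathcal{P}$ and an arbitrary labeling, and suppose for contradiction that every node accepts under \framework{}.
\begin{enumerate}
\item Each node $v$ then has an imagined proof $l_v$. Define the global proof $L(v):=l_v(v)$.
\item By Proposition~\ref{prop:diff_view_dist}, which we assume holds when $G\notin\mathcal{P}$, we get $l_v(u)=l_u(u)=L(u)$ for every neighbor $u$ of $v$. So $L$ agrees with $l_v$ on all of $N_1(v)$.
\item Because $v\in N_{2i}(v)$ accepts under $l_v$ according to $\mathcal{A}_0$, and $\mathcal{A}_0$ has view distance $1$, $v$ also accepts under $L$.
\item Thus every node accepts $L$ under $\mathcal{A}_0$, which contradicts soundness of the base LCP.
\end{enumerate}
This argument never uses the error bound on $L_{adv}$, which matches the soundness clause of the erroneous-proof model: it quantifies over every proof.

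Finally, \framework{} only inspects $G[N_{2i+1}(v)]$ and the labels on it, so its view distance is $2i+1$. The main subtlety is that the global consistency step rests entirely on Proposition~\ref{prop:diff_view_dist}, which is problem-specific. I will state it explicitly as a hypothesis and not attempt to prove it here, as the theorem statement allows; its verification is deferred to the property-specific arguments for cycle existence, cycle-freeness and bipartiteness. A second point to be careful about is that $\mathcal{A}_0$ has view distance exactly $1$. Otherwise step~3 of the soundness argument would require agreement of $L$ and $l_v$ on a larger ball than Proposition~\ref{prop:diff_view_dist} provides.
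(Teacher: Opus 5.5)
Your proof is correct and follows the paper's argument essentially verbatim. Completeness holds because $L_O$ restricted to $N_{2i+1}(v)$ is one of the enumerated candidates. Soundness follows by gluing the imagined proofs into $L(v)=l_v(v)$ and using Proposition~\ref{prop:diff_view_dist} to get agreement with $l_v$ on $N_1(v)$. The paper leaves two points implicit that you spell out: the radius check $N_1(w)\subseteq N_{2i+1}(v)$ for $w\in N_{2i}(v)$, and the fact that the gluing step relies on $\mathcal{A}_0$ having view distance exactly $1$.
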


\section{Verification Problems with Erroneous Proofs}
\label{sec:erroneous-labeling}

We prove Proposition \ref{prop:diff_view_dist} for three verification problems: cycle existence, cycle-freeness and bipartiteness.
Here we present the proof of Proposition \ref{prop:diff_view_dist} for cycle-freeness and bipartiteness. 
For cycle-freeness, the proof closely follows the proof for cycle existence. Therefore, to avoid repetition, we refer to the corresponding arguments from Sec. \ref{sec:erroneous-labeling} whenever applicable.

\subsection{Cycle Existence with Proofs with Errors}
\label{subsec:cycle-detection-with-errors}

\noindent \textbf{Setup (Oracular Proof and Base Algorithm):}
We denote the oracular proof for cycle existence by $L_O^1$, and the base (error-free variant) algorithm by $\mathcal{A}_0$-\textsc{cycle}.
Although Sec.~\ref{sec:3labels} presents a proof for cycle existence using only $3$ distinct proof labels and view distance $1$, for convenience we consider a proof $L_O^1$ with $n$ distinct proof labels. In $L_O^1$, each tree node is assigned its exact (BFS) distance to the nearest core node, rather than this distance modulo $3$. 
$\mathcal{A}_0$-\textsc{cycle} is adapted from $\mathcal{A}_0$ accordingly: a node with proof label $0$ accepts if it has at least two neighbors with proof label $0$ and no neighbor with proof label at least $2$, while a node $v$ with $L(v)>0$ accepts if it has exactly one neighbor with proof label $L(v)-1$ and all remaining neighbors with proof label $L(v)+1$.

By Theorem~\ref{thm:correctness-framework}, it remains to prove Proposition~\ref{prop:diff_view_dist} for cycle existence. Recall that in the soundness proof, we assume for contradiction that all nodes in an acyclic graph $G$ accept under some proof $L$.
We first state two key claims.

    \begin{remark}
        \label{remark:labels-differ-by-1}
        Under any proof $L$, the proof labels of any two adjacent nodes must differ by at most one, as otherwise $\mathcal{A}_0$-\textsc{cycle} would make at least one node reject.
    \end{remark}

    \begin{claim}
    \label{lem:path_labeling}
    
        Let $P = (v_0, v_1, \ldots, v_k)$ be a path in a graph $G$. Consider any fixed $a,b \in \mathbb{N}\cup \{0\}$.
        Then, there exists at most one proof $L$ on $P$ such that $L(v_0) = a$, $L(v_k) = b$, and every node on $P$ accepts according to $\mathcal{A}_0$-\textsc{cycle}.
    \end{claim}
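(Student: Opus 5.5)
The plan is to show that the accepting condition of $\mathcal{A}_0$-\textsc{cycle} forces the label sequence along $P$ to have a rigid ``valley'' shape, and that this shape is completely determined by $a$, $b$ and $k$. Write $\ell_j = L(v_j)$. Only the constraints that $\mathcal{A}_0$-\textsc{cycle} places on a node's neighbours \emph{along $P$} are needed, so extra neighbours of the $v_j$ off the path are harmless: they can only add constraints.

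\textbf{Step 1 (local rules).} Every node on $P$ accepts, so by Remark~\ref{remark:labels-differ-by-1} consecutive labels satisfy $|\ell_{j+1}-\ell_j|\le 1$. I then sharpen this.
\begin{itemize}
\item If $\ell_j>0$, every neighbour of $v_j$ has label $\ell_j\pm 1$. Hence $\ell_{j\pm1}\neq \ell_j$, so equal consecutive labels can only both be $0$.
\item For an interior node $v_j$ with $0<j<k$ and $\ell_j>0$, at most one neighbour carries $\ell_j-1$. Hence $\ell_{j-1}$ and $\ell_{j+1}$ cannot both equal $\ell_j-1$, so there is no strict interior local maximum.
\item A $0$-labelled interior node trivially cannot be a local maximum with both neighbours at $-1$, since labels are nonnegative.
\end{itemize}

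\textbf{Step 2 (shape).} From Step~1, the sequence $\ell_0,\dots,\ell_k$ changes by steps in $\{-1,0,+1\}$, and a $0$-step occurs only between two zeros. It also never goes up and then comes down. I will argue this by taking the first index after an increase at which a decrease occurs. Between them lies either a strict local maximum or a plateau at a positive value, and both are forbidden. Therefore the sequence decreases by exactly $1$ per step from $a$ down to a minimum value $m$. It then stays at $m$ for some block of length $z\ge 1$, where $z\ge 2$ is possible only if $m=0$. Finally it increases by exactly $1$ per step up to $b$.

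\textbf{Step 3 (counting).} The number of edges of $P$ gives
\[
k=(a-m)+(z-1)+(b-m).
\]
If $m>0$ then $z=1$, so $m=(a+b-k)/2$, which forces $a+b>k$. If $m=0$ then $z=k-a-b+1$, which forces $a+b\le k$. The two cases are therefore mutually exclusive, and they agree at the boundary $a+b=k$, where $m=0$ and $z=1$. In either case $m$ and $z$ are determined by $(a,b,k)$, so the whole sequence $\ell_0,\dots,\ell_k$ is determined. Hence at most one accepting proof on $P$ exists; when the parity or sign conditions fail, none exists.

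I expect the main obstacle to be Step~2. The point needing care is ruling out a plateau at a positive value and an interior peak simultaneously, and handling the endpoints correctly. The endpoint labels $a,b$ are fixed and their acceptance imposes no two-sided path constraint, so Step~1's rules should be applied only to interior nodes. The first-decrease-after-increase argument above is what I would use to handle this cleanly.
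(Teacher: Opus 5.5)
Your proof is correct, but it is organised differently from the paper's. The paper splits into Case A (some endpoint label nonzero) and Case B ($a=b=0$). In Case A it inducts on the length of the path. The maximum label $m>0$ cannot be attained at an interior node, so it sits at an endpoint, and the neighbour of that endpoint is forced to be $m-1$. The induction hypothesis (or Case B, if both new endpoint labels are $0$) is then applied to the path with that endpoint removed. Case B shows that the maximum must be $0$.

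You use the same local fact: an interior node with positive label cannot be a weak local maximum, and equal consecutive labels must both be $0$. You apply it globally instead, obtaining the valley shape in one stroke and then fixing $m$ and $z$ from $k=(a-m)+(z-1)+(b-m)$.

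What your route buys is an explicit description of the only possible proof, namely $\ell_j=\max\{a-j,\ b-(k-j),\ 0\}$. It also gives necessary conditions for existence: $|a-b|\le k$, and $a+b-k$ even when $a+b>k$. It treats both endpoint cases uniformly. It also avoids a point the paper's ``without loss of generality'' leaves implicit: when two accepting proofs are compared, the peeled endpoint must be the one labelled $\max\{a,b\}$ in both. The paper's induction is shorter but yields only uniqueness.

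You assert one step without proof: that the zero-steps form a single block between the last decrease and the first increase. This is immediate. A zero-step occurs at value $0$, so no decrease can follow it. No increase can precede it either, since returning to $0$ afterwards would require a decrease after an increase.
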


\begin{proof}[Proof of Claim \ref{lem:path_labeling}]
        We prove the claim by a case distinction depending on whether the endpoint labels $a$ and $b$ are both zero.

        $\blacktriangleright$ \textbf{Case A ($a \neq 0$ or $b \neq 0$):} In this case, we use induction on the path length $k'$.

        \textit{Inductive hypothesis $H(k')$:} For any $a',b' \in \mathbb{N}\cup \{0\}$ with $a' \neq 0$ or $b' \neq 0$, there exists at most one proof $L$ on a path of length $k'$ with endpoint nodes assigned the proof labels $a'$ and $b'$ under which every node accepts according to $\mathcal{A}_0$-\textsc{cycle}.

        \textit{Base case $H(1)$:} A path $(u_0, u_1)$ with fixed endpoint labels $L(u_0) = a'$ and $L(u_1) = b'$ either satisfies the acceptance conditions of $\mathcal{A}_0$-\textsc{cycle} or not. In either case, at most one proof exists, so $H(1)$ holds.

        \textit{Inductive step ($H(k') \Rightarrow H(k'+1)$):}
        Let $P'=(u_0,u_1,\dots,u_{k'+1})$ be a path of length $k'+1$ with endpoint labels $a',b'$ where $a' \neq 0$ or $b' \neq 0$. 
        If no proof satisfies these constraints while making all nodes accept under $\mathcal{A}_0$-\textsc{cycle}, then $H(k'+1)$ holds trivially. 
        
        Otherwise, let $L'$ be such a proof on $P'$ and $m = \max\{L'(u_0),\dots,L'(u_{k'+1})\}$.
        Since the proof label of at least one endpoint node is non-zero, $m > 0$.
        We show that no internal node $u_j$ ($1 \leq j \leq k'$) can have label $m$.
        Indeed, if $L'(u_j) = m > 0$, then by the acceptance rule of $\mathcal{A}_0$-\textsc{cycle}, $u_j$ requires exactly one neighbor with label $m-1$ and all others with label $m+1$. But both neighbors $u_{j-1}$ and $u_{j+1}$ satisfy $L'(u_{j-1}) \leq m$ and $L'(u_{j+1}) \leq m$ by maximality, so neither has label $m+1$. If either has label $m$, the acceptance condition is also violated. Hence $u_j$ rejects---a contradiction. Therefore, no internal node of $P'$ can have label $m$.

        So $m$ appears only at endpoints of $P'$. Without loss of generality, $L'(u_0)=m=a'$.
        The acceptance rule forces $L'(u_1) \in \{m-1, m+1\}$, and by maximality $L'(u_1) = m-1 = a'-1$.
        By $H(k')$ applied to the subpath $(u_1, \dots, u_{k'+1})$ with endpoint labels $a'-1$ and $b'$, at most one proof of this subpath exists.
        (If $a'-1 = 0$ and $b' = 0$, we apply Case~B below to this subpath, which also gives uniqueness.)
        Hence, at most one proof on $P'$ exists, establishing $H(k'+1)$.

        $\blacktriangleright$ \textbf{Case B ($a = b = 0$):}
        Let $L$ be a proof on the path $P = (v_0, v_1, \dots v_k)$ such that all nodes accept according to $\mathcal{A}_0$-\textsc{cycle}, and  $m$ be the maximal proof label on this path. 
        We claim that $m = 0$.
        If $m > 0$, then any node $v$ with the proof label $m$ must either be adjacent to another node with proof label $m$ or have two neighbors with label $m-1$.
        In either case, $v$ rejects according to $\mathcal{A}_0$-\textsc{cycle}, contradicting the assumption that all nodes accept.
         Hence $m = 0$, and the unique accepted proof is $L(v_j)=0$ for all $j$.

        Together, Cases~A and~B establish Claim \ref{lem:path_labeling}.
    \end{proof}

We state one more claim before proving Proposition~\ref{prop:diff_view_dist}.
Consider a path $P$ in the acyclic graph $G$ with at least $4i+2$ nodes and select two neighboring central nodes $u$ and $v$ on $P$.
For the rest of the analysis, we define the following terminology exclusive to these proofs. 

$\blacktriangleright$ We refer to a path $P_l = (v_1, v_2, \dots, v_k)$ as a \emph{long path} if $k \geq 4i+2$. 

$\blacktriangleright$ We refer to a path $P_s = (v_1, v_2, \dots, v_k)$ as a \emph{short path} if $2i+2 \leq k \leq 4i+1$.

$\blacktriangleright$ On any (long or short) path $P = (v_1, \dots, v_k)$ with $u = v_{2i+1}$ and $v = v_{2i+2}$, the sub-path $(v_1, v_2, \dots, v_{2i+1}=u)$ as $\texttt{left}(P)$ and the sub-path $(v = v_{2i+2}, \dots v_k)$ as $\texttt{right}(P')$. 
The notion of \texttt{left} or \texttt{right} is only for descriptive purposes. 

$\blacktriangleright$ The subgraph $T_{u,v}$ induced by the nodes in $N_{2i+1}(u) \cap N_{2i+1}(v)$ is a tree, since $G$ is acyclic. We refer to it as the \emph{surrounding tree of $u$ and $v$}. 
A node $w \in T_{u,v}$ is \emph{outer} if it is adjacent to some node outside $T_{u,v}$, and \emph{inner} otherwise.

$\blacktriangleright$ For each node $u$ and $v$, the nodes in the surrounding tree $T_{u,v}$ are assigned some proof labels in their local computation according to the imagined proof (recall Definition \ref{def:imagine}) $l_u$ and $l_v$.
We refer to the node in $T_{u,v}$ with the smallest label according to the imagined proof $l_u$ (resp. $l_v$) as $u_{root}$ (resp. $v_{root}$).

    \begin{claim}
    \label{claim:diff_view_dist}
        Consider two adjacent nodes $u$ and $v$ with their respective imagined proofs (ref. Definition \ref{def:imagine}) $l_u$ and $l_v$. If there exists a long path $P_l = (v_1,v_2,\dots,v_{4i+2})$ such that $u=v_{2i+1}$ and $v=v_{2i+2}$, then $l_u(u)=l_v(u)$ and $l_u(v)=l_v(v)$, i.e., their imagined proofs agree on each other's proof labels.
    \end{claim}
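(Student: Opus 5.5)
The plan is to reduce the statement to the path-uniqueness result, Claim~\ref{lem:path_labeling}. The subpath on which it is applied will have its endpoints on opposite sides of the edge $uv$, and at each endpoint the two imagined proofs must agree.

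\emph{Step 1 (what each imagined proof sees).} Every node of $P_l$ lies within distance $2i+1$ of both $u$ and $v$, so both $l_u$ and $l_v$ assign labels to all of $P_l$. The nodes $v_1,\dots,v_{4i+1}$ lie in $N_{2i}(u)$, so they accept under $l_u$ according to $\mathcal{A}_0$-\textsc{cycle}. Likewise $v_2,\dots,v_{4i+2}$ lie in $N_{2i}(v)$ and accept under $l_v$. Hence on the common window $W=(v_2,\dots,v_{4i+1})$, every node accepts under both proofs.

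\emph{Step 2 (restriction to the path).} An internal node of $W$ that accepts in $G$ also satisfies the acceptance rule on its two path neighbours. For a node with label $L(w)>0$, at most one path neighbour has label $L(w)-1$ and the other has $L(w)+1$. For a node with label $0$, the path neighbours have labels in $\{0,1\}$. Together with Remark~\ref{remark:labels-differ-by-1}, this should let us invoke Claim~\ref{lem:path_labeling} on any subpath of $W$ whose endpoint labels are fixed. The consequence is that $l_u$ and $l_v$ coincide on the whole segment between any two nodes of $W$ on which they agree. A small point to check is that Claim~\ref{lem:path_labeling} only uses the acceptance conditions along the path; if so, it applies verbatim.

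\emph{Step 3 (finding agreement points on both sides).} Each of $l_u$ and $l_v$ differs from $L_{adv}$ in at most $i$ nodes of its view. So $l_u$ and $l_v$ disagree on at most $2i$ nodes of $P_l$, and hence agree on at least $2i+2$ of its $4i+2$ nodes. I split $W$ into $\texttt{left}$, namely $v_2,\dots,v_{2i+1}$, and $\texttt{right}$, namely $v_{2i+2},\dots,v_{4i+1}$; each has $2i$ nodes. If there is an agreement node $a$ in the left half and an agreement node $b$ in the right half, Step 2 forces $l_u=l_v$ on $v_a,\dots,v_b$. This segment contains both $u$ and $v$, which gives the claim.

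\emph{Step 4 (the degenerate case, the main obstacle).} It remains to handle all $2i$ disagreements falling in one half, say the left. By counting, this forces $l_u$ and $l_v$ to each modify exactly $i$ distinct left nodes, and both to equal $L_{adv}$ on the entire right half together with the endpoint nodes. In this case $l_u(v)=l_v(v)$ holds immediately, and the difficulty is $u$.

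My first attempt is the rigidity of accepting labelings on a path. By the maximality argument in the proof of Claim~\ref{lem:path_labeling}, such a labeling is ``V-shaped'': it decreases by $1$ per step, possibly has a plateau of zeros, and then increases by $1$ per step. The two proofs share the $2i\ge 2$ consecutive right-half labels and the endpoint $v_1$, where both proofs are defined and, in this case, agree. I would compare these shared labels with the descending and ascending pieces of the two V-shapes: fixing the slope at $v$ and the value at $v_1$ should pin down the turning point, and hence the labels at $u$. If this rigidity argument leaves a gap, the fallback is to use the error-density assumption more cleverly. The adversary places at most $i$ errors in $N_{2i+1}(w)$ for every $w$, including $w=u$, and this may rule out the left half being fully corrupted relative to both imagined proofs.
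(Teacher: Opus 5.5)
Your proposal is correct and follows the paper's route. A counting argument gives a node on each side of the edge $uv$ where $l_u$ and $l_v$ agree, and Claim~\ref{lem:path_labeling} then forces agreement on the whole segment between them, which contains $u$ and $v$.

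The one difference is where you split. The paper splits all of $P_l$ into two halves of $2i+1$ nodes, so with at most $2i$ modifications an agreement node exists on each side outright and your Step~4 case never arises. That split tacitly allows $v_1$ or $v_{4i+2}$ as an endpoint of the segment, even though these need not accept under both proofs. This is harmless because the uniqueness argument only uses acceptance at internal nodes. That is exactly the valley-shape rigidity you invoke in Step~4, applied to $(v_1,\dots,v_{2i+2})$, so Step~4 does close the case and the error-density fallback is unnecessary.
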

\begin{proof}[Proof of Claim \ref{claim:diff_view_dist}]        
        
        Notice that all the nodes on $P_l$ are within the view distance of both $u$ and $v$.
        Each imagined proof $l_u$ (resp.\ $l_v$) can differ from $L_{adv}$ in at most $i$ nodes.
        Since $\texttt{left}(P_l)$ and $\texttt{right}(P_l)$ each contain $2i+1$ nodes and the total number of disagreements between $\{l_u, l_v\}$ and $L_{adv}$ is at most $2i$, there exists at least one node $v_l \in \texttt{left}(P_l)$ and at least one node $v_r \in \texttt{right}(P_l)$ such that $l_u(v_l) = l_v(v_l)=L_{adv}(v_l)$ and $l_u(v_r)=l_v(v_r)=L_{adv}(v_r)$, i.e., both $u$ and $v$ agree on the proof labels of the nodes $v_l$ and $v_r$. 
        Let $\hat{P}$ be the sub-path of $P_l$ from $v_l$ to $v_r$ (well-defined since $G$ is acyclic). 
        Both $l_u$ and $l_v$, restricted to $\hat{P}$, are proofs with the same endpoint labels under which all nodes accept according to $\mathcal{A}_0$-\textsc{cycle}. 
        By Claim~\ref{lem:path_labeling}, there exists at most one proof on this $\hat{P}$. 
        As a consequence, $u$ and $v$ agree on the proof labels of all the nodes on $\hat{P}$, and in particular $l_u(u)=l_v(u)$ and $l_u(v)=l_v(v)$, as both $u,v \in \hat{P}$.
    \end{proof}

    We now prove Proposition~\ref{prop:diff_view_dist} for cycle existence.




\begin{proof}[Proof of Proposition~\ref{prop:diff_view_dist} for Cycle Existence]
If there is no long path in $G$, it is small enough in the sense that there is a node seeing the whole $G$ within its view distance.
Such a node simply rejects, contradicting our assumption in the proof of soundness that all nodes accept.

So we focus on a more intricate case, when $G$ has at least one long path (i.e., a path with at least $4i+2$ nodes). 
Here, our interest lies in two types of adjacent nodes $u$ and $v$. 
First is where both $u$ and $v$ lie on a long path $P_l$.
In such a scenario, the proposition directly follows from Claim~\ref{claim:diff_view_dist}.
The second type of our interest is where one of $u$ and $v$ lies on a short path $P_s = (v_1, \dots, v_k)$ with $2i+2 \leq k \leq 4i+1$.

    Without loss of generality, let $v$ be such a node, i.e., $u = v_{2i+1}$ and $v = v_{2i+2}$. 
    Clearly, $\texttt{right}(P_s)$ has at most $2i$ nodes.
    Also, notice that every node in $\texttt{right}(P_s)$ is within $N_{2i}(v)$ and hence in $v$'s local computation, they all accept under the imagined proof $l_v$ on $\texttt{right}(P_s)$ according to $\mathcal{A}_0$-\textsc{cycle}.
    Furthermore, all the nodes in $\texttt{right}(P_s)$ lie within the view distance of $u$, i.e., within $N_{2i+1}(u)$.
       Below, we state some simple yet useful observations as remarks.

    \begin{remark}
        \label{remark:root-cannot-lie-inside-viewdistance}
        For node $w \in \{u,v\}$ with a view-distance $2i+1$, if the root $w_{root}$ is an inner node of $T_{u,v}$, then $l_w(w_{root})$ cannot be non-zero, as otherwise $w_{root}$ has no parent under the proof label assignment of the imagined proof $l_w$, and $w$ cannot accept in that case.
    \end{remark}

    \begin{remark}
        \label{remark:series-of-zeros}

        For a node $w \in \{u,v\}$, if $l_w(w_{root}) = 0$ and $w_{root}$ is an inner node of $T_{u,v}$, then there must exist at least one path of nodes all with proof labels $0$ between two outer nodes of $T_{u,v}$ under the imagined proof $l_w$ of $w$ that contains $w_{root}$.
    \end{remark}

    \begin{remark}
        \label{remark:root-cannot-lie-in-right}
        For a short paths $P_s = (v_1, \dots, v_k)$, $2i+2 \leq k \leq 4i+1$ with $u = v_{2i+1}$ and $v = v_{2i+2}$, we have $v_{root} \notin \bigcup_{P_s}\texttt{right}(P_s)$.
    \end{remark}

    As a consequence of Remark \ref{remark:root-cannot-lie-in-right}, we say that, in the imagined proof $l_v$ of $v$, any two consecutive nodes $v_r$ and $v_{r+1}$ in $\bigcup_{P_s}\texttt{right}(P_s)$ satisfy $l_v(v_r) < l_v(v_{r+1})$.
    If both $u$ and $v$ agree on the root of $T_{u,v}$, we are done: the agreed root must lie on $\bigcup_{P_s} \texttt{left}(P_s)$ and on an arbitrary short path $P_s$ passing through the root, and the node $u$ and $v$, there cannot be two monotonically increasing sequence of proof labels.
    When $u$ and $v$ disagree on the root, we prove the following two claims, depending on whether the proof labels of $u_{root}$ or $v_{root}$ are $0$.

    \begin{claim}
        \label{claim:one-root-nonzero}
        If at least one of $l_u(u_{root})$ and $l_v(v_{root})$ is not $0$, Proposition \ref{prop:diff_view_dist} holds.
    \end{claim}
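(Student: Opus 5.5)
The plan is to show that a non-zero root label forces that root to be an outer node of $T_{u,v}$ near the far end, and that this pins down the labels of $u$ and $v$ along a single path that both imagined proofs must label identically. Suppose without loss of generality that $l_w(w_{root})\neq 0$ for some $w\in\{u,v\}$. By Remark~\ref{remark:root-cannot-lie-inside-viewdistance}, $w_{root}$ cannot be an inner node of $T_{u,v}$. It must therefore be an outer node. Since $w_{root}$ has the minimum label in $T_{u,v}$, every node of $T_{u,v}$ reachable from it gets a label that increases away from $w_{root}$. This follows from the acceptance rule of $\mathcal{A}_0$-\textsc{cycle}, which allows exactly one neighbour with label $L-1$, together with Remark~\ref{remark:labels-differ-by-1}. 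Consequently, under $l_w$, the labels on the unique tree path from $w_{root}$ to $u$ and on to $v$ are strictly monotone, and each equals $l_w(w_{root})$ plus the distance from $w_{root}$.

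Next, I would use the short-path geometry. By Remark~\ref{remark:root-cannot-lie-in-right}, $v_{root}$ lies on the $\texttt{left}$ side of every short path through $u,v$. The $\texttt{right}$ part has at most $2i$ nodes and lies entirely in $N_{2i}(v)$, so every node there accepts under $l_v$. In particular, along the path through $u$ and $v$, the labels under $l_v$ increase from $u$ into $\texttt{right}(P_s)$ and must end at a leaf-like end. The dead end of $P_s$ in $\texttt{right}$ is then a node whose label exceeds its only predecessor's, which is consistent with acceptance only if it is a leaf with a single parent.

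The goal is a node $x$ on the $\texttt{left}$ side and a node $y$ on the $\texttt{right}$ side (or the end of $P_s$) on which $l_u$ and $l_v$ agree, after which Claim~\ref{lem:path_labeling} gives agreement on $u$ and $v$. I would get these by counting errors, as in Claim~\ref{claim:diff_view_dist}. The $\texttt{left}(P_s)$ part has $2i+1$ nodes, all visible to both $u$ and $v$, and at most $2i$ of them are corrected in total, so some $x\in\texttt{left}(P_s)$ has $l_u(x)=l_v(x)=L_{adv}(x)$. On the right side, if the $\texttt{right}$ part contains a node where neither proof corrected the label, that node serves as $y$ directly. Otherwise I would take $y=v_k$, the endpoint. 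Both proofs must then give $v_k$ the label "one more than its neighbour" with the whole of $\texttt{right}$ increasing, so both labellings of $\texttt{right}(P_s)$ are increasing sequences that start at the common value determined by monotonicity from $x$. The non-zero root hypothesis rules out the all-zero alternative of Case~B of Claim~\ref{lem:path_labeling}. Hence the two labellings coincide, and Claim~\ref{lem:path_labeling} applied to the subpath from $x$ to $y$ yields $l_u(u)=l_v(u)$ and $l_u(v)=l_v(v)$.

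The main obstacle is the right side. There are too few nodes there to guarantee an uncorrected common node by counting alone, so agreement must come from the forced monotone structure instead of from $L_{adv}$. This is where $l_w(w_{root})\neq 0$ is used: it excludes a zero plateau, so both proofs are strictly increasing from the common left anchor towards the end of the short path. I would first try to argue this via Remark~\ref{remark:root-cannot-lie-in-right} applied to both $u$ and $v$, and fall back on a direct minimum-label argument if $u$'s view handles $\texttt{right}$ differently.
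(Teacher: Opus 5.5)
Your left anchor is fine: $\texttt{left}(P_s)$ consists of $2i+1$ nodes of $T_{u,v}$, so some $x$ on it has $l_u(x)=l_v(x)=L_{adv}(x)$. The argument breaks at the right anchor.

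\textbf{The choice $y=v_k$ gives no agreed label.} The statement that both labellings increase along $\texttt{right}(P_s)$ and give $v_k$ one more than its neighbour is a relative constraint. It holds for $l_u$ and $l_v$ separately and says nothing about whether $l_u(v_k)=l_v(v_k)$. So Claim~\ref{lem:path_labeling} cannot be applied to the subpath from $x$ to $v_k$.

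\textbf{There is no common value determined by monotonicity from $x$.} The node $x$ is an arbitrary uncorrected node, not a root. Under each accepting proof, the labels along $x,\dots,u,v$ are valley-shaped: they decrease towards that proof's root (or its $0$-component) and then increase. Ignoring $0$-plateaus, suppose the minimum along this path sits $j_u$ steps from $x$ under $l_u$ and $j_v$ steps from $x$ under $l_v$. Then $l_u(v)-l_v(v)=2(j_v-j_u)$. So agreement at $u$ and $v$ is equivalent to the two proofs placing the root consistently, which is exactly what the claim asks you to prove.

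\textbf{The non-zero-root hypothesis does not supply this.} It concerns only \emph{one} of $l_u(u_{root})$ and $l_v(v_{root})$, so the other proof may still have a $0$-plateau. Even if both are non-zero, nothing forces the labels to increase starting from $x$. Also, Remark~\ref{remark:root-cannot-lie-in-right} is stated only for $v_{root}$; the paper treats $u_{root}\in\bigcup_{P_s}\texttt{right}(P_s)$ as a separate case.

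\textbf{The missing ingredient is the paper's orientation-counting argument}, which forces $u$ and $v$ to agree on the root of $T_{u,v}$. With both roots on the left, the paper uses two paths.
\begin{itemize}
\item The first is $P_3$: the common part of the paths from $u$ to $u_{root}$ and from $v$ to $v_{root}$, extended through $v$ into $\texttt{right}(P')$. On it, $l_u(z)=l_u(v_{k'})+dist(z,v_{k'})$ and $l_v(z)=l_v(v_{k'})+dist(z,v_{k'})$, where $v_{k'}$ is the branch point. So the two proofs agree everywhere on $P_3$ (and you are done) or nowhere.
\item The second is the root-to-root path $P_4$. On it the two proofs are monotone in opposite directions, so they agree on at most one node.
\end{itemize}
The path $P_2$ from $v$ to $v_{root}$ has $2i+2$ nodes and lies in $P_3\cup P_4$. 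This gives at least $2i+1$ disagreements between $l_u$ and $l_v$. That contradicts the fact that each deviates from $L_{adv}$ on at most $i$ nodes. The case where $u_{root}$ lies on the right is handled the same way via the path $P_5$.

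You need an argument of this type: bound the agreements along a path of at least $2i+2$ nodes on which the two proofs are oppositely oriented or offset by a constant. Counting uncorrected nodes on $\texttt{left}(P_s)$ alone cannot replace it.
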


    \begin{proof}
        We already know from Remark \ref{remark:root-cannot-lie-in-right} that $v_{root}$ cannot lie on any $\texttt{right}(P_s)$ for any $P_s$ passing through $u$ and $v$.
        Therefore, $v_{root} \in \bigcup_{P_s} \texttt{left}(P_s)$.
        Let us first assume that $u_{root} \in \bigcup_{P_s} \texttt{left}(P_s)$.
        Consider the following four paths: ($P_1$) from $u$ to $u_{root}$, ($P_2$) from $v$ to $v_{root}$, ($P_3) = (P_1 \cap P_2) \cup \texttt{right}(P')$ and ($P_4$) from $u_{root}$ to $v_{root}$.
        By construction, $P_3$ and $P_4$ have exactly one node in common, say $v_{k'}$. 
        For any two consecutive nodes $v_j$ and $v_{j+1}$ on $P_3$, both $u$ and $v$ satisfy $l_u(v_j) < l_u(v_{j+1})$ and $l_v(v_j) < l_v(v_{j+1})$, as the distance from the root to $v_{j+1}$ (and hence the proof label) must be one larger than that of $v_j$ from any of $u$ and $v$'s perspective.
        Observe that for each node $x \in P_3$, we have $l_u(x) = l_u(v_{k'}) + dist(x,v_{k'})$ and $l_v(x) = l_v(v_{k'}) + dist(x,v_{k'})$.
        If $l_u(v_{k'}) = l_v(v_{k'})$, from the above two identities, we are done proving Proposition \ref{prop:diff_view_dist}, as we have $l_u(u) = l_v(u)$ and $l_v(u) = l_v(v)$ in particular.
        
        Hence, let us continue assuming that $l_u(x) \neq l_v(x)$ for any $x \in P_3$.
        Notice that on the path $P_4$, $u$ imagines in $l_u$ that the proof labels monotonically decrease starting from $v_{root}$ toward $u_{root}$, whereas $v$ imagines the opposite (i.e., monotonically increase) in $l_v$ (because of disagreement on the location of the root of $T_{u,v}$).
        Consequently, there could be at most one node $v_{k''}$ on this path $P_4$ such that $l_u(v_{k''}) = l_v(v_{k''})$.
        Since $l_u(x) \neq l_v(x)$ for any $x \in P_3$ and there is at most one agreement on $P_4$, we can say that there is at most one agreement on $P_3 \cup P_4$. 
        Evidently, $P_2 \subseteq P_3 \cup P_4$ and $P_2$ contains $2i+2$ nodes.
        Therefore, there are at least $2i+1$ disagreement on the path $P_2$, which is impossible, since $u$ and $v$ can have at most $2i$ disagreement in total. 

        Now, if $u_{root} \in \bigcup_{P_s} \texttt{right}(P_s)$, we construct a path $P_5$ with at least $2i+2$ nodes as follows.
        When both $l_u(u_{root}), l_v(v_{root}) \neq 0$, by Remark \ref{remark:root-cannot-lie-inside-viewdistance}, both $u_{root}$ and $v_{root}$ are inner nodes of $T_{u,v}$ and hence we let $P_5$ be the path from $u_{root}$ to $v_{root}$.
        However, if one of $l_u(u_{root}), l_v(v_{root})$ is $0$ and the other is not, by Remark \ref{remark:series-of-zeros}, there exists at least one path of $0$-labeled nodes between two outer nodes of $T_{u,v}$ that contains the root with label $0$.
        Without loss of generality, if $l_u(u_{root})= 0$, then there is an outer node $x \in \bigcup_{P_s}\texttt{right}(P_s)$ with $l_u(x) = 0$; we let $P_5$ be the path connecting $v_{root}$ through $u_{root}$ to $x$.
        The disagreement between $u$ and $v$ on the root of $T_{u,v}$ implies that the child$\rightarrow$parent direction on $P_5$ is opposite in $l_u$ and $l_v$.
        Since $P_5$ has at least $2i+2$ nodes and the opposite orientations allow at most one agreement, there are at least $2i+1$ disagreements, leading to a contradiction as they can have at most $2i$ disagreement in total.

        Hence, $u$ and $v$ must agree on the root of $T_{u,v}$. So, on any path passing through $u$, $v$, and the root, $l_u$ and $l_v$ must have the same proof labels assigned, giving Proposition \ref{prop:diff_view_dist}.
    \end{proof}

    \begin{claim}
        \label{claim:both-root-zero}
        If both $l_u(u_{root})$ and $l_v(v_{root})$ are $0$, Proposition \ref{prop:diff_view_dist} holds.
    \end{claim}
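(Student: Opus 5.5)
We treat the case $l_u(u_{root})=l_v(v_{root})=0$ with $u_{root}\neq v_{root}$; if the roots coincide, the argument given just before Claim~\ref{claim:one-root-nonzero} already applies. We argue by contradiction, assuming $l_u(u)\neq l_v(u)$ or $l_u(v)\neq l_v(v)$, and aim to exhibit a path inside $T_{u,v}$ with at least $2i+2$ nodes on which $l_u$ and $l_v$ agree in at most one node. That contradicts the fact that $l_u$ and $l_v$ each differ from $L_{adv}$ in at most $i$ nodes, so they disagree on at most $2i$ nodes.

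\emph{Step 1: locate the zero paths.} By Remark~\ref{remark:series-of-zeros}, under $l_u$ there is a path $Z_u$ of $0$-labelled nodes through $u_{root}$ joining two outer nodes of $T_{u,v}$. Likewise there is a path $Z_v$ under $l_v$ through $v_{root}$. Since $G$ is acyclic, $\mathcal{A}_0$-\textsc{cycle} makes the endpoint of a maximal $0$-path reject unless that endpoint is outside the checked region. Hence each such zero path must reach the boundary of $N_{2i}(\cdot)$ on both sides. Because $T_{u,v}$ is a tree and every node of $\bigcup_{P_s}\texttt{right}(P_s)$ lies in $N_{2i}(v)$, Remark~\ref{remark:root-cannot-lie-in-right} rules out $0$-labels in $l_v$ on any $\texttt{right}(P_s)$ beyond what is forced. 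We then split into cases according to whether $Z_u$ and $Z_v$ intersect.

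\emph{Step 2 (the zero paths intersect).} Let $x$ be a common node; there both proofs give label $0$. Take the tree path from $x$ to $u$ and on to $v$. Off the zero paths, labels increase by one per step away from the zero set, by Remark~\ref{remark:labels-differ-by-1} and the acceptance rule. Claim~\ref{lem:path_labeling}, applied between $x$ and an agreed node (which exists by the counting argument of Claim~\ref{claim:diff_view_dist} on the long side), then forces $l_u$ and $l_v$ to coincide on that segment, which yields the Proposition.

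\emph{Step 3 (the zero paths are disjoint).} Consider the path $Q$ running from an outer endpoint of $Z_u$, through the tree connection between $Z_u$ and $Z_v$, to an outer endpoint of $Z_v$. Choose the endpoints so that $Q$ passes near $u,v$ and has at least $2i+2$ nodes; this is possible since outer nodes are at distance about $2i+1$ from $u$ or $v$. On the $Z_u$ portion, $l_u\equiv 0$, whereas $l_v$ is strictly positive and increasing away from $Z_v$. Symmetrically, on the $Z_v$ portion $l_v\equiv 0$ while $l_u$ is positive. On the connecting segment the two proofs are monotone in opposite directions. So, as in Claim~\ref{claim:one-root-nonzero}, there is at most one agreement on $Q$, giving at least $2i+1$ disagreements, a contradiction.

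The main obstacle is Step 3: guaranteeing that $Q$ really has $2i+2$ nodes and stays inside $N_{2i+1}(u)\cap N_{2i+1}(v)$. In particular, one must handle the case where one zero path leaves only through the short side $\texttt{right}(P_s)$. I would first try to use that $v$ checks acceptance on all of $\texttt{right}(P_s)\subseteq N_{2i}(v)$, so that a $0$-path under $l_v$ cannot terminate there, and then pick the endpoints on the long (\texttt{left}) side.
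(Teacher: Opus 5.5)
You split on whether the zero sets of $l_u$ and $l_v$ meet, which is the paper's split. But in each case you try to force a single outcome (agreement in Step~2, contradiction in Step~3), while the paper's proof hinges on a dichotomy at one specific node. Neither step goes through without it.

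\emph{Step~2.} A shared $0$-node $x$ does not determine the labels at $u$ and $v$. The zero components $C_u,C_v$ (connected subtrees of $T_{u,v}$) can reach toward $u$ by different amounts. Then, from where the deeper one stops down through $u$, $v$ and $\texttt{right}(P_s)$, one proof is $0$ or exceeds the other by a fixed positive constant, so there may be no agreed node at all on the far side of $u,v$ from $x$. The agreed node your counting produces comes from the long \texttt{left} side, since $\texttt{right}(P_s)$ has at most $2i$ nodes. That is the same side as $x$, because there are no zeros in $\texttt{right}(P_s)$. So the segment to which you apply Claim~\ref{lem:path_labeling} never contains $v$. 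Your preliminary reduction ``if the roots coincide'' fails for the same reason when the common root is labelled $0$. The paper instead takes $x\in C_u$ and $y\in C_v$ nearest to $u$. If $x=y$, both proofs equal $dist(x,\cdot)$ on the way to $u,v$. If $x\neq y$, every node on a path from an outer node of $C_u$ (avoiding $C_v$) through $x,u,v$ into $\texttt{right}(P')$ is a disagreement, and that path has at least $2i+2$ nodes.

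\emph{Step~3.} The obstacle you flag is not a technicality: you never use your contradiction hypothesis, and without it the disjoint case can be perfectly consistent. For example, take siblings $a,b$ with common parent $m$, both at distance $2i$ from $u$ on the side away from $v$. Put $l_u=dist(a,\cdot)$, with two children of $a$ (outside $T_{u,v}$) labelled $0$, and $l_v=dist(b,\cdot)$; $b$ is at distance $2i+1$ from $v$, so $v$ does not check it. All checked nodes accept, $C_u=\{a\}$ and $C_v=\{b\}$, and the proofs differ in $T_{u,v}$ only at $a,b$. They agree at $u$ and $v$, and your $Q=(a,m,b)$ has three nodes.

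The missing ingredient is the paper's path $P_3$ from Claim~\ref{claim:one-root-nonzero}. Let $w$ be where the branch toward $u,v$ leaves $C_u\cup C_v\cup{}$(connecting path). On that branch each proof equals its value at $w$ plus $dist(w,\cdot)$. So either $l_u(w)=l_v(w)$, giving agreement at $u,v$ directly, or, and this is where your hypothesis enters, every node of the branch including $\texttt{right}(P')$ is a disagreement. Appending that branch to the opposite-monotone connecting segment and a zero component gives a path of at least $2i+2$ nodes with at most one agreement.

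Finally, split on the components $C_u,C_v$ rather than on chosen paths $Z_u,Z_v$. If $Z_u\cap Z_v=\emptyset$ but $C_u\cap C_v\neq\emptyset$, your $Q$ can cross common zeros, and the ``at most one agreement'' claim fails.
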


    \begin{proof}
        Let $C_u$ (resp.\ $C_v$) denote the maximal connected set of nodes in $T_{u,v}$ assigned the proof label $0$ under $l_u$ (resp.\ $l_v$).
        Each of $C_u$ and $C_v$ is a single connected component, since two disjoint $0$-components in $T_{u,v}$ would force an intermediate non-zero-labeled node to reject under $\mathcal{A}_0$-\textsc{cycle}.
        We distinguish two cases.

        \textbf{Case $C_u \cap C_v = \emptyset$:} By Remark~\ref{remark:series-of-zeros}, there exist outer nodes $v_{k'} \in C_u$ and $v_{k''} \in C_v$ with $v_{k'} \neq v_{k''}$.
        Let $P_4$ be the path from $v_{k'}$ to $v_{k''}$, and define $P_3$ as in Claim~\ref{claim:one-root-nonzero}.
        The same counting argument applies: along $P_4$, the proof labels in $l_u$ and $l_v$ increase in opposite directions from their respective $0$-components, allowing at most one agreement, while on $P_3$ there are no agreements.
        Using the same argument as Claim \ref{claim:one-root-nonzero} on the number of total disagreements between $u$ and $v$, we get Proposition \ref{prop:diff_view_dist}.

        \textbf{Case $C_u \cap C_v \neq \emptyset$:}
        Let $x \in C_u$ and $y \in C_v$ be the nodes in their respective components closest to $u$.
        If $x = y$, then $l_u$ and $l_v$ share a common $0$-labeled node. Since labels strictly increase by $1$ along any simple path moving away from $x$ (by the acceptance rule of $\mathcal{A}_0$-\textsc{cycle}), Claim~\ref{lem:path_labeling} implies that $l_u$ and $l_v$ agree on every node reachable from $x$ within $T_{u,v}$; in particular $l_u(u)=l_v(u)$ and $l_u(v)=l_v(v)$.
        If $x \neq y$, assume without loss of generality that $dist(x, u) \leq dist(y, u)$.
         Consider a path that contains $\texttt{right}(P')$ (for an arbitrary short path $P'$), goes through $u, v$ and $x$ and reaches an outer node in $C_u$ without entering the component $C_v$ beyond their intersection. Let us take this path as $P_4$.
         Then, $l_u(w) \neq l_v(w)$ for each node $w \in P_4$. Notice that the path $P_4$ contains at least $2i+2$ nodes (as it contains 
         a sub-path from $v$ to the outer node).
         Therefore, there are at least $2i+2$ disagreements, which is impossible, since $u$ and $v$ can have at most $2i$ disagreements.
         We have analyzed all the cases, so the claim holds.
             \end{proof}
              Claim \ref{claim:one-root-nonzero} and \ref{claim:both-root-zero} together establish Proposition \ref{prop:diff_view_dist}.
\end{proof}

\subsection{Cycle-freeness with Proofs with Errors}
\label{subsec:cycle-absence-erroneous-labels}

As \framework{} relies on the base verification algorithm $\mathcal{A}_0$ for making decisions, we briefly recall the oracular proof and the corresponding error-free verification algorithm for cycle-freeness, both of which are already known in the literature~\cite{ostrovsky2017space,goos2016locally}.

\vspace{2mm}
\noindent \textbf{Setup (Oracular Proof and Base Algorithm):} We denote the oracular proof for cycle-freeness by $L_O^2$, and the base (error-free variant) algorithm by $\mathcal{A}_0$-\textsc{cycle-free}.
Under the oracular proof $L_O$, a designated root receives the proof label $0$, and all other nodes receive their exact BFS distance to this root. 
The base verification algorithm $\mathcal{A}_0$-\textsc{cycle-free} checks that any node with a proof label $> 0$ has exactly one neighbor with a proof label smaller by $1$ (its parent) and all other neighbors with proof labels larger by $1$ (its children).

We call \framework{}-Framework$(\mathcal{A}_0\textsc{-cycle-free}, i)$, and to establish its correctness with respect to cycle-freeness, we must show that Proposition \ref{prop:diff_view_dist} holds. 




\begin{proof}[Proof of Proposition \ref{prop:diff_view_dist} (for Cycle-freeness)]
    Recall that we started the soundness proof of \framework{} with the assumption for the sake of contradiction that $G$ contains a cycle, but all nodes accept under some adversarial proof $L_{adv}$.
    Consider any two adjacent nodes $u$ and $v$ that imagine (ref. Definition \ref{def:imagine}) valid proofs $l_u$ and $l_v$, respectively.
    
    Because $u$ accepts under the imagined proof $l_u$, all nodes in $N_{2i}(u)$ accept according to $\mathcal{A}_0$-\textsc{cycle-free}.
    This strict parent-child orientation enforced by $\mathcal{A}_0$-\textsc{cycle-free} implies that the subgraph induced by $N_{2i}(u)$ cannot contain a cycle; otherwise, at least one node on the cycle would have either two parents or no parent, causing $\mathcal{A}_0$-\textsc{cycle-free} to reject.
    Thus $N_{2i}(u)$ and $N_{2i}(v)$ are strictly acyclic.
    Consequently, the subgraph $T_{u,v} = N_{2i+1}(u) \cap N_{2i+1}(v)$, as defined in Sec. \ref{subsec:cycle-detection-with-errors}, is a tree. 

    If $u$ and $v$ lie on a long path (i.e., of the length at least $4i+2$) $P_l = (v_1, v_2, \dots v_k)$ with $k \geq 4i+2$ and $u = v_{2i+1}$, $v = 2i+2$, the conculsion follows directly from the same argument presented in Claim \ref{claim:diff_view_dist} for cycle existence: the proof labels on the long path are unique determined, so the two imagined proofs must agree on the proof labels of $u$ and $v$.
    Thus it remains to consider the case where $u$ and $v$ lie on a short path $P_s = (v_1, v_2, \dots v_k)$ with $2i+2 \leq k \leq 4i+1$ and $u = v_{2i+1}$, $v = 2i+2$.

    Under the cycle-freeness verification algorithm $\mathcal{A}_0$-\textsc{cycle-free}, every node with a positive proof label must have exactly one neighbor whose proof label is smaller by one, while all its other neighbors must have proof labels larger by one, and moreover the node with the proof label $0$ is identified as the root. 
    Consequently, once the root of $T_{u,v}$ is fixed, the proof labels on each simple path in $T_{u,v}$ are uniquely determined by distance from that root.
    Similar to the cycle existence proof in Sec. \ref{subsec:cycle-detection-with-errors}, let $u_{root}$ and $v_{root}$ be the roots (i.e., the node with the smallest proof label) of $T_{u,v}$ under the imagined proof $l_u$ and $l_v$, respectively.

    If $u_{root} = v_{root}$, then both imagined proofs induce the same parent-child orientation on the tree $T_{u,v}$.
    Since both $l_u$ and $l_v$ use the same root and every node in $T_{u,v}$ must accept according to $\mathcal{A}_0$-\textsc{cycle-free}, the proof labels assigned by the two imagined proofs must coincide on all nodes of $T_{u,v}$. In particular, $l_u(u) = l_v(u)$ and $l_u(v) = l_v(v)$, and we are done in this case.

    Let us suppose that $u_{root} \neq v_{root}$. 
    Each of $l_u$ and $l_v$ may differ from $L_{adv}$ in at most $i$ nodes inside $N_{2i+1}(u)$ and $N_{2i+1}(v)$, respectively.
    Thus, inside $T_{u,v}$, the two imagined proofs together can disagree with $L_{adv}$ on at most $2i$ nodes.
    Since $T_{u,v}$ is a tree, $T_{u,v}$ must possess \emph{outer nodes} (defined similar to Sec. \ref{subsec:cycle-detection-with-errors}), i.e., nodes adjacent to nodes outside $T_{u,v}$.
    Consider the simple path $P$ in $T_{u,v}$ connecting two outer nodes that passes through both $u_{root}$ and $v_{root}$. 
    We consider the following two cases, depending on whether the path $P$ passes through $u$ and $v$.

    \begin{itemize}
        \item Let $u, v \in P$. 
        In this case, the path $P$ must have at least $2i+2$ nodes, as it is a simple path in the (sub)tree $T_{u,v}$.
        Along $P$, the proof labels induced by $l_u$ are the distances from $u_{root}$, while the proof labels induced by $l_v$ are the distance from $v_{root}$.
        Hence the two proofs are monotone in opposite directions on $P$, and therefore they can agree on at most one node on $P$.
        This implies that there are more than $2i$ disagreement in total, which contradicts the fact that the two imagined proofs together can disagree with $L_{adv}$ on at most $2i$ nodes.

        \item Let $u, v \notin P$. In this case, there is no guarantee that the above path $P$ contains at least $2i+2$ nodes. 
        Hence, we consider a node $w \in P$ that is closest to node $v$ (and thus to $u$ too). Define $C$ as the set of all nodes in $T_{u,v}$ reachable from $w$ without traversing any other node in $P$. 
        Since $u, v \notin P$, the adjacent nodes $u$ and $v$ both belong to the subtree induced by $C$.
        Because both $u_{root}$ and $v_{root}$ lie on $P$, the shortest path from any node $x$ to either root must pass through $w$ (you may think of $w$ as a gateway node in $P$). Consequently, under the base verification algorithm $\mathcal{A}_0$-\textsc{cycle-free}, the imagined proof labels of all nodes in $C$ are deterministically tied to $w$ and must strictly increase as their distance from $w$ increases. In particular, for any $x \in C$, $l_u(x) = l_u(w) + dist(w,x)$ and $l_v(x) = l_v(w) + dist(w,x)$. 
        If $l_u(w) = l_v(w)$, both $l_u$ and $l_v$ agree on every node in $C$ and hence on $u$ and $v$ too, so the proposition holds.
        On the other hand, if $l_u(w) \neq l_v(w)$, $l_u$ and $l_v$ are forced to disagree on every node in $C$. In this case, take a path $P'$ inside $T_{u,v}$ connecting two outer nodes of $T_{u,v}$ that passes through $u_{root}, w, u$ and $v$ (we can also take $v_{root}$ in place of $u_{root}$). 
        By construction, $P'$ contains at least $2i+2$ nodes. Since there could be at most one node on $P$ where $l_u$ and $l_v$ can agree on, the two imagined proofs disagree on more than $2i$ nodes on $P'$, leading to a contradiction. 
        
    \end{itemize}

    Hence, as a conclusion, we can say that it is not possible to have $u_{root} \neq v_{root}$.
    This proves Proposition \ref{prop:diff_view_dist}.

\end{proof}

\subsection{Bipartiteness with Proofs with Errors}
\label{subsec:bipartiteness-erroneous-labeling}

Before moving to the proof of Proposition \ref{prop:diff_view_dist}, let us mention the trivial (oracular) LCP for bipartiteness.

\vspace{2mm}
\noindent \textbf{Setup (Oracular Proof and Base Algorithm):} In a bipartite graph $G = (X,Y)$, we can simply use $2$ distinct proof labels and view distance $1$ for each node, where each node in $X$ gets $0$, and each node in $Y$ gets $1$. 
For verification, we denote the base (error-free variant) algorithm running at each node by $\mathcal{A}_0$-\textsc{bipartite}.
It makes a node $v$ accept if all $u \in N_1(v)$ have a different proof label than its own.

We call \framework{}-Framework$(\mathcal{A}_0\textsc{-bipartite}, i)$, and to establish its correctness with respect to bipartiteness, we must show that Proposition \ref{prop:diff_view_dist} holds.
The proof of Proposition~\ref{prop:diff_view_dist} is considerably simpler than in the case of cycle existence or cycle-freeness, and therefore we present only a concise argument. The proposition for bipartiteness can be rewritten as follows.




\begin{proof}[Proof of Proposition \ref{prop:diff_view_dist} (for Bipartiteness)]
We do not require the notions defined for the proof of Proposition \ref{prop:diff_view_dist} in the case of cycle existence or cycle-freeness (e.g., long path, short path).
Consider a simple path $P = (v_1, \dots, v_k)$ with $2i+2 \le k \le 4i+2$ nodes such that $u = v_{2i+1}$ and $v = v_{2i+2}$. 
In particular, $P$ contains strictly more than $2i$ nodes.
On this path $P$, the nodes $u$ and $v$ together can disagree $L_{adv}$ on the proof labels of at most $2i$ nodes. 
Since $|P| > 2i$, there exists at least one node $w \in P$ such that $l_u(w) = l_v(w)$.

Fix such a node $w$, and consider the sub-path of $P$ that connects $w$ to $u$ and $v$. Since every node accepts according to the verification algorithm $\mathcal{A}_0$\textsc{-bipartite}, each node on $P$ must have a proof label different from that of its neighbors. Hence, once the proof label of $w$ is fixed, the proof labels of all other nodes on $P$ are uniquely determined by the parity of their distance from $w$. As $l_u(w)=l_v(w)$, both imagined proofs induce the same proof labels on $P$. Consequently, $l_u$ and $l_v$ agree on every node of $P$, and in particular on $u$ and $v$.

\end{proof}

\section{Impossibility Results for Proofs with Errors}
\label{sec:lower-bound-erroneous-labelings}

We now investigate how the parameters $i$ (the number of errors) and the view distance relate.
We prove the following impossibility for cycle existence, which can even be extended to cycle-freeness and bipartiteness.

\begin{theorem}
    \label{thm:impos_cycle}
 If some $L_{adv}$ differs from $L_O$ in at most $i$ nodes even in the entire graph $G$, and even with no restriction on the number of distinct proof labels, then there exists no verification algorithm for cycle existence with a view distance of at most $i$ for every node.
\end{theorem}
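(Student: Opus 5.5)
Assume, for contradiction, that some verifier $\mathcal{A}$ with view distance $k\le i$ and a prover exist for cycle existence in the erroneous-proof model. The plan is an indistinguishability argument: take a cyclic graph $G_1$ with oracular proof $L_O$, let the adversary corrupt few labels, and build an acyclic graph $G_2$ with a proof $L$ under which every node has the view of some node of $G_1$ that is forced to accept. Soundness then fails, since it must hold for every proof of $G_2$, not only for small perturbations.

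For the construction, let $G_1$ be a cycle $C$ on $N$ nodes $c_0,\dots,c_{N-1}$ with $N \gg i$, say $N\ge 4i+4$. Put $e_1=(c_0,c_{N-1})$. Let $G_2$ be the path obtained by deleting $e_1$ and giving every node the same identifier and label it had in $G_1$. A node whose radius-$k$ ball in $G_1$ avoids $e_1$ has an identical view in $G_2$ and accepts in $G_1$ by completeness. The difficulty is the nodes near the cut ends $c_0$ and $c_{N-1}$, which see a dangling endpoint instead of a cycle edge.

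The adversary's budget is used to handle these endpoints, and I expect this to be the main obstacle. Completeness must hold for \emph{every} admissible $L_{adv}$, yet the views near the path endpoints in $G_2$ differ structurally, not only in labels, from the views in $G_1$. So label corruption alone cannot make an endpoint view look like a cycle view. I would try to make the cyclic instance itself contain a dangling structure. Take $G_1$ to be a cycle with a long pendant path $Q$ attached, of length much larger than $i$, so that the core node of the path has degree $1$ at its far end. Take $G_2$ to be the same graph with the cycle edge $e_1$ (far from $Q$) removed, so the result is a tree. In $G_1$ the nodes of $Q$ carry tree labels oriented toward the cycle.

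The key step is then to choose the proof of $G_2$ near the two new endpoints. Near $c_0$ and $c_{N-1}$ (within distance $k$) copy the labels of the pendant leaf end of $Q$ and its neighbourhood, so each such node sees exactly what a node near the leaf of $Q$ sees. Elsewhere keep the labels of $L_O$. The two label patterns have to be glued across the transition, where the labelling along each half of the broken cycle switches from "core" to "tree, pointing inward". At most $i$ nodes near each seam must be relabelled. Each seam node of $G_2$ must then be matched to a node of $G_1$ under a corrupted proof $L_{adv}$ that differs from $L_O$ in at most $i$ nodes in total. To do this, place the same at most $i$ modified labels somewhere in $G_1$ (on the cycle or on $Q$) in the identical local configuration. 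Since view distance $k\le i$ and the budget is $i$, one seam fits within one adversarial region of $G_1$, and by symmetry a second copy of $G_1$ covers the other seam.

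Completeness forces every node of $G_1$ to accept under every such $L_{adv}$. Every ball of $G_2$ is isomorphic, with identifiers and labels, to a ball in one of these corrupted instances. Hence all nodes of the acyclic $G_2$ accept, contradicting soundness. Identifiers in the duplicated copies need to be kept consistent by reusing the same identifiers in each instance. The genuinely delicate point remains verifying that $i$ corruptions within radius $k\le i$ really suffice to make the seam views coincide. If they do not, I would fall back to making the cycle length itself at most $2i+1$ and checking whether $i$ errors suffice to mask it.
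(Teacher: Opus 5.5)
Your construction can be made to work, and its mechanism is the paper's. The yes-instance carries a dangling end, the new leaves of the no-instance copy that end's labels, and the adversary's budget absorbs the mismatch where copied and original labels meet. But the proposal has a genuine gap: it stops exactly at the step that \emph{is} the proof. You assert that ``at most $i$ nodes near each seam must be relabelled'' and that one seam fits into one adversarial region, then flag this as unverified. Your fallback (a cycle of length at most $2i+1$) is a dead end: for $k=i$ the whole cycle lies in the ball of each of its nodes, so no acyclic graph has a matching view. The mechanism you sketch is also wrong as stated. Your ``second copy of $G_1$ covers the other seam'' suggests one corrupted instance per seam, which fails under a global budget of $i$ errors. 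A node just on the $Q$-side of the seam must be matched on $Q$, and its neighbour just on the cycle side must be matched on the cycle near $e_1$. Each needs up to $k$ changes, about $2k$ in one instance, which exceeds $i$ when $k=i$.

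The missing idea is a per-node split of the window at the seam. Write $Q=(c_a=q_0,q_1,\dots,q_L)$ and let $c_t$ carry $L_O(q_{L-t})$ in $G_2$ for $0\le t\le d$ (symmetrically at $c_{N-1}$). Take $d\ge k$, $L>d+k$, and $c_a$ at distance more than $d+k$ from $c_0$ and from $c_{N-1}$. The ball of any $c_j$ with $j\le d+k$ is cut by the seam into two parts: the centre's side, which agrees exactly with the corresponding part of a ball of $G_1$ under $L_O$, and a far side of at most $k\le i$ nodes. Concretely:
\begin{itemize}
\item If $j\le d$, compare $c_j$ with $q_{L-j}$ under the $L_{adv}$ that sets $L_{adv}(q_{L-t})=L_O(c_t)$ for $d<t\le j+k$.
\item If $d<j\le d+k$, compare $c_j$ with $c_j$ itself under the $L_{adv}$ that sets $L_{adv}(c_t)=L_O(q_{L-t})$ for $j-k\le t\le d$.
\end{itemize}
Each such $L_{adv}$ changes at most $k\le i$ labels in the whole graph. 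The condition $d\ge k$ is what keeps the ball of $c_j$ in $G_1$ away from $c_{N-1}$ in the second case, so the two balls have the same shape. Because completeness quantifies over \emph{every} admissible $L_{adv}$, you may use a different corrupted instance for each node. All balls containing no copied label coincide with balls of $G_1$ under $L_O$, so every node of the tree $G_2$ accepts.

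No meaning of the prover's labels (your ``core'' versus ``tree, pointing inward'') is used. None may be, since in a lower bound $L_O$ is an arbitrary labelling. With this filled in, your proof is a valid variant of the paper's, which uses a smaller pair of graphs. Its yes-instance is the path $v_0,\dots,v_{3i+2}$ with a triangle attached. Its no-instance is a bare path labelled $L_O(v_0),\dots,L_O(v_{2i+1})$ followed by the mirror image $L_O(v_i),\dots,L_O(v_0)$. This gives a single seam: each $u_m$ with $m\ge 2i+2$ is compared with $v_{3i+2-m}$ and sees at most $i$ foreign labels, the same count as above. Finally, drop the claim that balls agree ``with identifiers''. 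Near $c_0$ you pair $c$-identifiers with $Q$-labels, so, as in the paper, only labelled balls can be compared.
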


\begin{proof}[Proof of Theorem \ref{thm:impos_cycle}]
    Assume, for the sake of contradiction, that such an algorithm $\mathcal{A}$ exists.
    Consider a graph $G_1$ that is a simple path of $3i+3$ nodes, denoted by $(u_0, u_1,u_2,\dots, u_{3i+2})$.
    Let $G_2$ be a graph consisting of a path $(v_0, v_1, v_2, \cdots, v_{3i+2})$ together with an additional triangle formed by three nodes $x, y$ and $z$, where node $x$ is connected with $v_{3i+2}$.
    We show that if all nodes in $G_2$ accept according to $\mathcal{A}$, then so do all nodes in $G_1$.

    Let us focus on the sub-path $S=(v_{i+1},\dots,v_{2i+1})$ of $i+1$ nodes in $G_2$. 
    Every node in $S$ is at a distance of at least $i$ from both the leaf $v_0$ and the triangle. 
    Hence, within the view distance $i$, each node in $S$ sees $2i+1$ nodes (including itself), all of which have degree $2$.
    For convenience, let us denote $l_j = L_O(v_{i+j})$ for $j = 1, 2, \dots, i+1$.
    Thus $(l_1, \dots, l_{i+1})$ is the sequence of proof labels on $S$ under the proof $L_O$. We now prove the following lemma.

    \begin{lemma}
    \label{lem:impos_acc_2i+1}
        Any node in $S$ that sees $2i+1$ nodes with degree $2$ each in a path and sees a consecutive sequence of proof labels $(l_1,\dots,l_{i+1})$ must accept.
    \end{lemma}

    \begin{proof}
        Fix a node $v \in S$ and assume that it has a view distance $i$.
        Since there could be at most $i$ errors in $G$, the proof labels of at least $i+1$ nodes must remain unchanged within $N_i(v)$.
        In particular, the adversary may arbitrarily modify the proof labels of any $i$ nodes within $N_i(v)$, while leaving the sequence of proof labels $(l_1,\dots,l_{i+1})$ unchanged.
        Since $v$ must accept every adversarial proof $L_{adv}$ differing from $L_O$ in at most $i$ nodes, its decision cannot depend on those $i$ potentially corrupted proof labels. Hence acceptance of $v$ according to the algorithm $\mathcal{A}$ is determined solely by the sequence of $i+1$ labels on $S$.
    \end{proof}


    Let us now construct a proof $L$ for $G_1$ such that $L(u_{i+j})=l_j$ for 
    $1 \leq j \leq i+1$. By Lemma~\ref{lem:impos_acc_2i+1}, all nodes $u_{i+1}, \dots, u_{2i+1}$ accept, as they 
    see a sequence of proof labels $(l_1,\dots,l_{i+1})$.
    To complete the construction of the proof $L$ for $G_1$, we must assign proof labels to the remaining nodes in $\{u_0,\dots,u_{i}\} \bigcup \{u_{2i+2},\dots,u_{3i+2}\}$.
    We do this by mapping the views of the nodes $v_0, \dots, v_i$ in $G_2$ onto both ends of $G_1$, as follows: for $0 \leq j \leq i$, we set $L(u_j) = L_O(v_j)$, and $L(v_{3i+2-j}) = L_O(v_j)$.
    We now evaluate the decision of the nodes in $G_1$ under $\mathcal{A}$.

    For any node $u_m$ with $0 \leq m \leq i$, its view extends at most to $u_{2i}$.
    By our construction and the definition of $S$, $L(u_j) = L_O(v_j)$ for $0 \leq j \leq 2i+1$.
    Thus, the local view of $u_m$ is identical to that of $v_m$ in $G_2$ under $L_O$.
    Because $\mathcal{A}$ makes $v_m$ accept even when there are $i$ nodes within $N_i(v_m)$ differing in proof labels from $L_O$, $\mathcal{A}$ must make $u_m$ accept.

    Finally, consider any node $u_m$ with $2i+2 \leq m \leq 3i+2$.
    Its local view contains nodes from the set $\{u_{2i+2}, \dots, u_{3i+2}\}$ along with at most $i$ nodes from $S$. 
    Due to our construction of $L$, we can compare the view of $u_m$ in $G_1$ and $v_{3i+2-m}$ in $G_2$.
    Both nodes 
    see an identical sequence of proof labels for the nodes 
    $\{u_{3i+2}, \dots, u_{2i+2}\}$ as $\{v_0, \dots v_i\}$.
    Since $u_m$ can see at most $i$ nodes from $S$, the proof labels of the nodes within $N_{i}(u_m)$ differ from the accepted oracular proof on $N_i(v_{3i+2-m})$ in at most $i$ nodes.
    Since $\mathcal{A}$ must tolerate $i$ adversarial modifications within the $i$-hop neighborhood for any node, it must make $u_m$ accept.

    Therefore, every node in $G_1$ accepts according to $\mathcal{A}$, even after it is acyclic, which is a contradiction to the soundness property. This completes the proof of Theorem \ref{thm:impos_cycle}.
\end{proof}

\begin{proof}[$\blacktriangleright$ Impossibility Proof of Theorem \ref{thm:impos_cycle} for Cycle-freeness:]
Suppose, for the sake of contradiction, that such an algorithm $\mathcal{A}$ exists for verifying cycle-freeness. 

Consider a graph $G_1 = (u_0, u_1, \dots, u_{3i+2})$ that is a simple path of $3i+3$ nodes.
Let us focus on the sub-path $S = (u_{i+1}, \dots, u_{2i+1})$ of $i+1$ nodes in $G_1$.
Every node in $S$ is at distance at least $i+1$ from both endpoints $u_0$ and $u_{3i+2}$. Hence, within the view distance $i$, each node in $S$ sees $2i+1$ nodes (including itself), all of which have degree $2$.
For convenience, let us denote $l_j = L_O(u_{i+j})$ for $j = 1, 2, \dots, i+1$.
Thus $(l_1, \dots, l_{i+1})$ is a sequence of proof labels on $S$ under $L_O$.
By the same argument as Lemma \ref{lem:impos_acc_2i+1}, the acceptance of each node $u_{i+j}$ for $1 \leq j \leq i+1$ is determined solely by the sequence $(l_1, \dots, l_{i+1})$ and its position within the view; the remaining $i$ proof labels in the view may take arbitrary values without affecting acceptance.

Now construct a graph $G_2$: a cycle of length $3i+3$ on nodes $v_0, v_1, \dots, v_{3i+2}$, where $v_k$ is adjacent to $v_{k+1}$ for every $0\le k<3i+2$, and $v_{3i+2}$ is adjacent to $v_0$.
We assign proof labels $L'(v_k) = l_{(k \bmod (i+1)) + 1}$ for all $k$.
Since $3(i+1)$ is a multiple of $i+1$, this proof label assignment fits exactly around the cycle.
We now show that all nodes in $G_2$ accept under $\mathcal{A}$.
Consider any node $v_k$ in $G_2$.
Since $3(i+1) > 2i+1$, the view of $v_k$ within distance $i$ does not wrap around the entire cycle, and thus consists of $2i+1$ consecutive degree-$2$ nodes forming a simple path---identical in structure to the view of any node in $S$.
The $2i+1$ labels in $v_k$'s view are a contiguous window from the periodic label pattern of period $i+1$.
Since each node sees $2i+1$ consecutive nodes while the sequence $(l_1, l_2, \ldots, l_{i+1})$ has length only $i+1$, every node $v_k$ in $G_2$ sees at least one complete occurrence of this sequence within its view.
That is, there exists an index $t_0$ with $0\le t_0\le i$ such that the proof labels appearing at positions $t_0, t_0 + 1, \ldots, t_0 + i$ within the view of $v_k$ are exactly $(l_1, l_2, \ldots, l_{i+1})$.
Now set $m = i - t_0 + 1$ so that $1 \leq m \leq i+1$). 
Consider the node $u_{i+m}$ in $G_1$. Its view also consists of $2i+1$ degree-$2$ nodes forming a path, and moreover the sequence $(l_1, l_2, \ldots, l_{i+1})$ appears in exactly the same positions $t_0, t_0 + 1, \ldots, t_0 + i$ within this view.
Therefore, the views of $u_{i+m}$ and $v_k$ have the same graph structure and agree on these $i+1$ proof labels. The remaining at most $i$ proof labels may differ. Since $\mathcal{A}$ tolerates up to $i$ adversarial modifications and accepts at $u_{i+m}$ in $G_1$, it must also accept at $v_k$ in $G_2$.

As $v_k$ was arbitrary, every node in $G_2$ accepts according to $\mathcal{A}$. Since $G_2$ contains a cycle, this contradicts the soundness of $\mathcal{A}$ for cycle-freeness.  
\end{proof}

\begin{proof}[$\blacktriangleright$ Impossibility Proof of Theorem \ref{thm:impos_cycle} for Bipartiteness:]

The impossibility proof for bipartiteness follows the same structure as the cycle-freeness proof, with one key modification.
Since bipartiteness requires the no-instance to be a \emph{non-bipartite} graph, we use an odd cycle as $G_2$.
However, the periodic proof label assignment used in the cycle-freeness proof requires the cycle length to be a multiple of $i+1$, which cannot be odd when $i$ is odd.
To handle this, we do the following: we take $G_1 = (u_0, u_1, \dots, u_{4i+4})$ to be a path of $4i+5$ nodes and $G_2 = (v_0, v_1, \dots, v_{2i+2})$ to be an odd cycle of length $2i+3$.
We assign proof labels of the nodes of $G_2$ by setting $L'(v_j) = L_O(u_{i+1+j})$ for all $j = 0, 1, \dots 2i+2$.

For any node $v_j$ in $G_2$, its view consists of exactly $2i+1$ consecutive degree-$2$ nodes forming a path, since the cycle length is $2i+3>2i+1$.
The corresponding node $u_{i+1+j}$ in $G_1$ also has a view of the same graph structure, because it lies at a distance of at least $i+1$ from both endpoints of the path.
Moreover, the proof labels in these two views agree at every corresponding position.
Hence, the view of $v_j$ in $G_2$ is identical to the view of $u_{i+1+j}$ in $G_1$.

Therefore, every node $v_j$ accepts according to $\mathcal A$.
Since $G_2$ is an odd cycle, it is not bipartite, which contradicts soundness.
\end{proof}

\section{CONGEST Implementation and Potential Future Directions}
\label{sec:congest-model}
One motivation for using proofs of size as small as possible is to adopt them in the CONGEST model. 
Unlike LOCAL, which places no constraint on per-edge bandwidth like CONGEST, a direct simulation of a LOCAL algorithm with view distance $k$ in CONGEST would require gathering the entire $k$-hop neighborhood, which may incur a prohibitive message size. 
We initiate a study of efficient CONGEST implementation for LCPs by considering the LCP for cycle existence with $2$ distinct proof labels ($0$ and $1$) and view distance $3$, as in Sec. \ref{sec:2label-3viewdistance-algorithm}.

We show that the verification algorithm $\mathcal{A}_0$\textsc{-2labels} (Algorithm~\ref{alg:A_0-algorithm-2labels}) can be implemented in the CONGEST model using only $3$ synchronous communication rounds, where each message consists of $O(1)$ bits (specifically, at most $4$ bits per message).

\vspace{1mm}
\noindent $\blacktriangleright$ \textbf{Highlevel Idea:} The algorithm proceeds in $3$ rounds. In round $1$, every node broadcasts its proof label (a single bit) to all its neighbors, so that each node learns the proof labels of its $1$-hop neighborhood.
In round~$2$, every node compresses the information about its $1$-hop neighborhood into one of $9$ predefined \emph{type messages} (requiring at most $4$ bits) and broadcasts it.
After receiving these messages, each node $v$ has enough information to simulate the \textsc{Discover-Parent}$(v)$ subroutine (Algorithm~\ref{alg:parent}), which helps $v$ decide whether $v$ is a core node, a tree node (and if so, identifies its parent), or whether $v$ should reject.
In round~$3$, each node broadcasts its decision so that neighboring nodes can verify mutual consistency, mirroring the consistency check in Algorithm~\ref{alg:A_0-algorithm-2labels}.

\vspace{1mm}
\noindent $\blacktriangleright$ \textbf{Type Messages.}
After round~$1$, each node $v$ knows the proof labels of all its neighbors. 
Based on this information, $v$ computes and broadcasts in round~$2$ a \emph{type message} $\tau(v)$, which is one of the following $9$ types and $\lceil \log_2 9 \rceil = 4$ bits suffice to encode $\tau(v)$.

\begin{quote}
\begin{enumerate}
    \item[\textbf{T1.}] $\deg(v) = 1$ (leaf node)
    
    \item[\textbf{T2.}] $\deg(v) = 2$ and both neighbors have the same proof label
    
    \item[\textbf{T3.}] $\deg(v) = 2$ and the two neighbors have different proof labels
    
    \item[\textbf{T4.}] $\deg(v) \geq 2$ and every neighbor $u \in N_1(v)$ has $L(u) = 0$
    
    \item[\textbf{T5.}] $\deg(v) \geq 2$ and every neighbor $u \in N_1(v)$ has $L(u) = 1$
    
    \item[\textbf{T6.}] $\deg(v) \geq 2$ and every neighbor $u \in N_1(v)$ except exactly one has $L(u) = 0$
    
    \item[\textbf{T7.}] $\deg(v) \geq 2$ and every neighbor $u \in N_1(v)$ except exactly one has $L(u) = 1$
    
    \item[\textbf{T8.}] $\deg(v) \geq 2$, $L(v) = 0$, and at least two neighbors $u, w$ of $v$ satisfy $L(u) = L(w) = 0$ 
    
    \item[\textbf{T9.}] None of the above
\end{enumerate}

\end{quote}

The rest of the algorithm is similar to $\mathcal{A}_0$\textsc{-2labels} (Algorithm~\ref{alg:A_0-algorithm-2labels}).

We present the pseudocode of the CONGEST variant (Algorithm~\ref{alg:congest-cycle-2labels}) and then prove its correctness, most of which follow from the analysis presented for the LOCAL variant of the problem.
In Algorithm~\ref{alg:congest-cycle-2labels}, 
during round~$3$, each node $v$ broadcasts a \emph{decision message} $\delta(v)$ encoding one of: ``I am a core node,'' ``My parent is $u$'' (where $u \in N_1(v)$ is identified by its port number), or ``I reject.''
Using the received decision messages, each node performs the same consistency check as in Lines \ref{2label-cycle-existence-line-start} - \ref{2label-cycle-existence-line-end} of Algorithm~\ref{alg:A_0-algorithm-2labels}.

\begin{algorithm2e}[!ht]
\DontPrintSemicolon
\tcc{\textbf{Round 1:} Share proof labels}
Broadcast $L(v)$ to all neighbors\;
Receive $L(u)$ from every neighbor $u \in N_1(v)$\;
\BlankLine
\tcc{\textbf{Round 2:} Share neighborhood type}
Compute type message $\tau(v)$ based on $L(v)$, $\{L(u) : u \in N_1(v)\}$, and $\deg(v)$\;
Broadcast $\tau(v)$ to all neighbors\;
Receive $\tau(u)$ from every neighbor $u \in N_1(v)$\;
\BlankLine
\tcc{Local computation: simulate \textsc{Discover-Parent}$(v)$}
Using $L(v)$, $\{L(u), \tau(u) : u \in N_1(v)\}$, determine:\;
\eIf{$v$ is a leaf (exactly one neighbor)}
    {$role(v) \leftarrow \texttt{tree}$; $parent(v) \leftarrow$ the unique neighbor\;}
    {
    \eIf{exactly one neighbor $p$ has $\deg(p) > 1$ (deduced from $\tau$)}
        {$role(v) \leftarrow \texttt{tree}$; $parent(v) \leftarrow p$\;}
        {
        \eIf{$v$ finds a valid base string $S_b$ consistent with all $v$-centered routes}
            {$role(v) \leftarrow \texttt{tree}$; $parent(v) \leftarrow$ node corresponding to the $2$nd bit of $S_b$\;}
            {
            \eIf{$L(v) = 0$ and at least two neighbors $u,w$ have $L(u) = L(w) = 0$}
                {$role(v) \leftarrow \texttt{core}$\;}
                {$role(v) \leftarrow \texttt{reject}$\;}
            }
        }
    }
\BlankLine
\tcc{\textbf{Round 3:} Consistency check}
\lIf{$role(v) = \texttt{reject}$}{\Return{\reject{}}}
Broadcast $\delta(v) = role(v)$ (and $parent(v)$ if tree node) to all neighbors\;
Receive $\delta(u)$ from every neighbor $u \in N_1(v)$\;
\BlankLine
\If{$role(v) = \texttt{core}$}
    {
        \lIf{at least $2$ neighbors $u$ have $\delta(u) = \texttt{core}$}{\Return \accept{}}
        \lElse{\Return \reject{}}
    }
\If{$role(v) = \texttt{tree}$ with $parent(v) = p$}
    {
        \lIf{$\delta(p) = \texttt{core}$ \textbf{or} ($\delta(p) = \texttt{tree}$ and $parent(p) \neq v$)}{\Return \accept{}}
        \lElse{\Return \reject{}}
    }
\caption{\textsc{CONGEST-Cycle-2Labels} (executed at node $v$)}
\label{alg:congest-cycle-2labels}
\end{algorithm2e}

\subsection*{Analysis of Algorithm \ref{alg:congest-cycle-2labels}.}
The key claim is that the type messages, together with the proof labels from round~$1$, carry enough information for every node to simulate the \textsc{Discover-Parent} subroutine and the subsequent consistency check of the LOCAL algorithm.
Once this is established, correctness follows from the existing analysis of the LOCAL variant.
Hence, we have the following theorem. 

\begin{theorem}
\label{thm:congest-cycle-existence}
    The proof $L_O^*$, consisting of $2$ distinct proof labels, together with Algorithm~\ref{alg:congest-cycle-2labels}, establishes an LCP for verifying cycle existence in $3$ synchronous CONGEST rounds, where each message consists of at most $4$ bits.
\end{theorem}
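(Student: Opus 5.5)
The plan is a simulation argument: I will show that at every node $v$, Algorithm~\ref{alg:congest-cycle-2labels} produces exactly the same output as $\mathcal{A}_0$-\textsc{2labels} (Algorithm~\ref{alg:A_0-algorithm-2labels}) under the same proof. Completeness and soundness then transfer directly from Theorem~\ref{thm:cycle_2labels_3viewdistance}, namely Lemma~\ref{lem:2labels-3view-graph-with-cycles} for cyclic graphs under $L_O^*$ and Lemma~\ref{lem:2labels-3view-graph-without-cycles} for acyclic graphs under any $L$. The message-size bound is immediate: round~1 sends one bit, round~2 sends one of $9$ types (so $4$ bits), and round~3 sends a decision among $\{$core, tree, reject$\}$. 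For the tree decision, $v$ does not need to send a port number. It suffices for $v$ to send to each neighbor $u$ a single bit saying whether $u$ is its parent, since CONGEST permits different messages on different edges. This keeps every message within $4$ bits.

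\textbf{Step 1: round-3 consistency check.} I will show that the check after round~3 matches lines~\ref{2label-cycle-existence-line-start}--\ref{2label-cycle-existence-line-end}. Once every node $u$ has computed $role(u)$ and $parent(u)$ exactly as \textsc{Discover-Parent}$(u)$ would, the decision messages $\delta(u)$ give $v$ exactly the values $result[u]$ for $u\in N_1(v)$. So this step reduces to showing that $role(v)$ and $parent(v)$ equal the output of \textsc{Discover-Parent}$(v)$. That subroutine reads only $G[N_2(v)]$ and the labels on $N_2(v)$.

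\textbf{Step 2: simulating \textsc{Discover-Parent}$(v)$ from depth-1 data.} I will go through the branches of Algorithm~\ref{alg:parent} in order.
\begin{itemize}
\item The leaf test uses only $\deg(v)$.
\item The "exactly one neighbor of degree $>1$" test needs, for each neighbor $u$, whether $\deg(u)=1$. This is read off as $\tau(u)=$T1.
\item The core test uses only labels in $N_1(v)$, which $v$ has after round~1.
\item The main branch asks two things. First, does some $v$-centered route $v_{-2}v_{-1}vv_1v_2$ spell a base string $S_b$? Second, for such an $S_b$, do all length-$5$ strings on $v$-centered routes lie in $t(S_b)$? The strings realised on routes through $v$ are determined by two pieces of data: the label $L(v_{\pm1})$ of each neighbor, and the multiset of labels of $v_{\pm1}$'s neighbors other than $v$. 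Routes may not backtrack, so the outer nodes $v_{\pm2}$ must exclude $v$.
\end{itemize}

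Since $v$ knows its own label, it can subtract its own contribution from a neighbor $u$'s neighborhood summary $\tau(u)$. What it needs from $u$ is the set $\Lambda(u)\subseteq\{0,1\}$ of labels occurring among $N(u)\setminus\{v\}$. The core of the proof is a finite case check that $\tau(u)$, together with $L(v)$, determines $\Lambda(u)$:
\begin{itemize}
\item T4 or T5 fixes $\Lambda(u)$ as a single label.
\item T2 or T3 with $\deg(u)=2$ gives the other neighbor's label: the same as $L(v)$ for T2, different for T3.
\item T6 or T7: removing $v$ leaves either a single label (if $v$ was the exceptional neighbor) or both labels. This is ambiguous exactly when $\deg(u)=2$, which is why T2 and T3 are listed separately.
\end{itemize}

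\textbf{Main obstacle.} The hard part is making the type list exhaustive and unambiguous. The types overlap (T8 versus T4/T6, and T2/T3 versus T4--T7), and T9 loses information. I would first fix a priority order, taking the first matching type in the listed order. Then I would argue that any $u$ falling into T8 or T9 has, after removing $v$, neighbors of both labels: T9 means $\deg(u)\ge3$ with at least two neighbors of each label, and T8 can be refined the same way. In that case $\Lambda(u)=\{0,1\}$ is still recoverable.

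Finally I will check, using Table~\ref{table:refined-membership}, that the set of route strings (not their multiplicities) is all the main branch needs. The membership test "$S'\notin t(S_b)$" and the choice of the parent as the node at the second bit depend only on which label-patterns occur. The parent is the unique neighbor, by Claim~\ref{claim:unique-parent}, whose label and $\Lambda$ match the prefix of $S_b$. If a pattern could be realised by several neighbors, the LOCAL algorithm would face the same ambiguity, and I would resolve it identically, by rejecting.
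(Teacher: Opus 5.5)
Your proposal is correct and follows essentially the paper's route: you show that $L(v)$, the neighbors' labels and the type messages $\tau(u)$ let $v$ reproduce \textsc{Discover-Parent}$(v)$ and the round-3 consistency check, and then inherit completeness and soundness from Lemmas~\ref{lem:2labels-3view-graph-with-cycles} and~\ref{lem:2labels-3view-graph-without-cycles}. Several of your refinements repair points the paper's proof glosses over, namely a first-match order on the overlapping types, decoding T8/T9 neighbors as $\Lambda(u)=\{0,1\}$ (the paper instead discards routes through T9 nodes on the false ground that they must reject), and a per-edge ``you are my parent'' bit in place of a broadcast port number, which is what actually keeps round-3 messages within $4$ bits; the one slip is that rejecting on an ambiguous second position is not what $\mathcal{A}_0$-\textsc{2labels} does, but it is harmless, because $v$ knows which port occupies position two in every realized string and under $L_O^*$ no child of $v$ matches the parent's prefix (this follows from a label check on the six base strings, not from Claim~\ref{claim:unique-parent}).
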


\begin{proof}[Proof of Theorem \ref{thm:congest-cycle-existence}]
We first prove the following key lemma.

\begin{lemma}[Simulation Correctness]
\label{lem:congest-simulation}
    After rounds~$1$ and~$2$ of Algorithm~\ref{alg:congest-cycle-2labels}, every node $v$ can determine the same output as the \textsc{Discover-Parent}$(v)$ subroutine (Algorithm~\ref{alg:parent}) using only the information $L(v)$, $\{L(u) : u \in N_1(v)\}$, and $\{\tau(u) : u \in N_1(v)\}$.
    Moreover, round~$3$ enables $v$ to perform the same consistency check as in Algorithm~\ref{alg:A_0-algorithm-2labels}.
\end{lemma}

\begin{proof}
    The proof of this lemma is constructive. 
    Recall that the \textsc{Discover-Parent}$(v)$ subroutine (Algorithm~\ref{alg:parent}) requires $v$ to handle four cases: (i)~$v$ is a leaf, (ii)~$v$ has exactly one non-leaf neighbor, (iii)~$v$ identifies a valid base string along a $v$-centered route, or (iv)~$v$ is a core node.
    After rounds~$1$ and~$2$, node $v$ knows its own label $L(v)$, the label $L(u)$ and the type message $\tau(u)$ for every neighbor $u \in N_1(v)$.
    We show case by case that this information suffices.

    Cases~(i) and~(iv) are immediate: for~(i), $v$ knows its own degree; for~(iv), $v$ knows $L(v)$ and $\{L(u) : u \in N_1(v)\}$ from round~$1$, which suffices to check whether $L(v)=0$ and at least two neighbors have label~$0$.
    Case~(ii) requires $v$ to detect which neighbors are leaves, which is immediate since $\tau(u) = \text{T1}$ if and only if $u$ is a leaf.

    It remains to handle case~(iii), in which $v$ must determine whether it lies on a $v$-centered route of $5$ nodes whose proof labels form a valid base string $S_b \sqsubseteq S_\infty$, and if so, identify its parent.
    For any $v$-centered route $w_1, u, v, u', w_2$, the length-$5$ string is $L(w_1) L(u) L(v) L(u') L(w_2)$.
    Node $v$ already knows $L(u)$, $L(v)$, and $L(u')$; hence it only needs to determine $L(w_1)$ and $L(w_2)$, i.e., the proof label of one $2$-hop neighbor through each of $u$ and $u'$.
    We now show that for each neighbor $u$ of $v$, the type message $\tau(u)$ together with $L(v)$ determines the proof label of the $2$-hop endpoint of any route through $u$, as follows.

    \begin{itemize}
        \item If $\tau(u) =$ T1: $u$ is a leaf, so there is no $2$-hop  neighbor through $u$. No $v$-centered route of $5$ nodes passes through $u$. 

        \item If $\tau(u) =$ T2: $u$ has exactly two neighbors with the same proof label. Since $v$ is one of them, the other neighbor $w$ of $u$ has the proof label $L(v)$. Hence, $L(w) = L(v)$.

        \item If $\tau(u) =$ T3: $u$ has exactly two neighbors with different labels. Since $v$ is one of them with label $L(v)$, the other neighbor $w$ has label $1-L(v)$.

        \item If $\tau(u) =$ T4 \text{ or } T5: All neighbors of $u$ share the same label ($0$ for T4, $1$ for T5). Every $2$-hop neighbor of $v$ through $u$ has that label.

        \item If $\tau(u) =$ T6: All neighbors of $u$ except exactly one have label~$0$, and the exception has label~$1$. If $L(v) = 1$, then $v$ is the unique exception, so every other neighbor of $u$ has label~$0$.
        If $L(v) = 0$, then the exception lies among $u$'s other neighbors: exactly one of them has label~$1$, while the rest have label~$0$.

        \item If $\tau(u) =$ T7: Symmetric to T6. All neighbors of $u$ except exactly one have label~$1$.
        If $L(v) = 0$, then $v$ is the exception, and all other neighbors of $u$ have label~$1$.
        If $L(v) = 1$, then the exception lies behind $u$.

        \item If $\tau(u) =$ T8: $u$ has identified itself as a potential core node (label~$0$ with at least two neighbors of label~$0$). This directly informs $v$ that $u$ may be a core node.

        \item If $\tau(u) =$ T9: $u$'s neighborhood does not match any of T1--T8. In particular, $u$ itself will reject, so no valid base string passes through $u$.
    \end{itemize}

    For types T1-T5, the label of every $2$-hop neighbor through $u$ is fully determined, so $v$ can construct the complete length-$5$ string for any $v$-centered route through $u$.
    For types T6 and T7, when $v$ is the exception (e.g., $\tau(u) = \text{T6}$ and $L(v) = 1$), all other neighbors of $u$ share the same label, so the $2$-hop endpoint label through $u$ is determined.
    When the exception lies behind $u$ (e.g., $\tau(u) = \text{T6}$ and $L(v) = 0$), exactly one of $u$'s other neighbors has a different label.
    In this case, $v$ knows that there are two possible labels ($0$ and $1$) for the $2$-hop endpoint behind $u$, depending on whether the route passes through the exceptional neighbor of $u$ or not. 
    However, this is sufficient for the \textsc{Discover-Parent} subroutine: for each candidate base string $S_b$, $v$ can check whether the $2$-hop proof label required by $S_b$ behind $u$ is consistent with the type message $\tau(u)$.
    Specifically, $v$ enumerates the candidate base strings for each $v$-centered route and verifies whether all other length-$5$ strings from $v$-centered routes belong to the tree string set $t(S_b)$ (Table~\ref{table:refined-membership}).
    Since no base string is a palindrome and there is no pair $S_{bi}, S_{bj}$ with both $S_{bi} \in t(S_{bj})$ and $S_{bj} \in t(S_{bi})$, the direction is uniquely determined, and $v$ identifies its parent correctly.

    Finally, the consistency check in round~$3$ requires each node $v$ to compare its own role (core/tree with parent/reject) against the decisions of its neighbors.
    This is identical to Algorithm~\ref{alg:A_0-algorithm-2labels}: a core node must have at least two core-node neighbors, and a tree node $v$ with parent $p$ must verify that $p$ does not claim $v$ as its parent (or that $p$ is a core node).
    Since round~$3$ provides $v$ with exactly the decision messages $\{\delta(u) : u \in N_1(v)\}$, this check can be performed.

    Therefore, the output of every node $v$ in Algorithm~\ref{alg:congest-cycle-2labels} is identical to the output of $\mathcal{A}_0$\textsc{-2labels} (Algorithm~\ref{alg:A_0-algorithm-2labels}).
\end{proof}

Rest of the proof of Theorem \ref{thm:congest-cycle-existence} follows from analysis of the LOCAL variant. 
By Lemma~\ref{lem:congest-simulation}, Algorithm~\ref{alg:congest-cycle-2labels} produces the same accept/reject decisions as the LOCAL verification algorithm $\mathcal{A}_0$\textsc{-2labels} (Algorithm~\ref{alg:A_0-algorithm-2labels}) at every node.
Lemma~\ref{lem:2labels-3view-graph-with-cycles} and Lemma~\ref{lem:2labels-3view-graph-without-cycles} establish the completeness and soundness properties. 
\end{proof}

This unfolds some interesting open problems, and we conclude with them. 

\vspace{1mm}

\noindent $\blacktriangleright$ \textbf{Open Problem 1:} Can we implement LCPs with errors efficiently in the CONGEST model, where nodes need access to a larger neighborhood depending on the number of errors?

\vspace{1mm}

\noindent $\blacktriangleright$ \textbf{Open Problem 2:}  In the LOCAL model, which of the problems mentioned in \cite{goos2016locally} admit LCPs even with errors, and moreover, can we modify \framework{} to achieve optimal view distance?

\vspace{1mm}

\noindent $\blacktriangleright$ \textbf{Open Problem 3:} Our $2$-label cycle-existence construction based on the repeated string $001101$ suggests a more general question: can similar pattern-based gadgets be used to encode local direction or other structural information for different verification problems?

\subsection*{Acknowledgment:} This work is supported by Polish National Science Centre project no.\ 2020/39/B/ST6/03288.

\newpage
\appendix

\bibliographystyle{plainurl}
\bibliography{bibliography}

\end{document}